\newcommand{\beq}{\begin{equation}}
	\newcommand{\eeq}{\end{equation}}
\newcommand{\abs}[1]{\left|#1\right|} 
\newenvironment{proof}{\par\noindent\textbf{Proof.} }{\hfill $\square$\par}
\def\BibTeX{{\rm B\kern-.05em{\sc i\kern-.025em b}\kern-.08em
		T\kern-.1667em\lower.7ex\hbox{E}\kern-.125emX}}
\begin{document}

\begin{frontmatter}

\title{\textcolor{blue}{A Volumetric Privacy Measure for Dynamical Systems With Bounded Disturbance}} 

\thanks{The work was partially supported by the Research Grants Council of Hong Kong under Project CityU 21208921, a grant from Chow Sang Sang Group Research Fund sponsored by Chow Sang Sang Holdings International Limited.}

\author[CityU]{Chuanghong Weng}\ead{cweng7-c@my.cityu.edu.hk},    
\author[CityU]{Ehsan Nekouei}\ead{enekouei@cityu.edu.hk},               

\address[CityU]{Department of Electrical Engineering, City University of Hong Kong, Hong Kong, China}  

\begin{keyword}                           
Volumetric privacy measure; privacy protection; interval analysis; bounded disturbance.               
\end{keyword}                             

\begin{abstract}                          
\textcolor{blue}{In this paper, we first present a volumetric privacy measure for dynamical systems with bounded disturbances, wherein the states of the system contain private information and an adversary with access to sensor measurements attempts to infer the set of potential values of the private information. Under the proposed privacy measure, the volume of the uncertainty set of the adversary given the sensor measurements is considered as the privacy level of the system. We next characteristic the time evolution of the proposed privacy measure and study its properties for a particular system with both public and private states, where a set containing the public state is shared as the observation. Approximate set-membership estimation techniques are developed to compute the private-state uncertainty set, and the properties of the privacy measure are analyzed, demonstrating that the uncertainty reduction of the adversary is bounded by the information gain from the observation set. Furthermore, an optimization-based privacy filter design problem is formulated, employing randomization and linear programming to enhance the privacy level. The effectiveness of the proposed approach is validated through a production–inventory case study. Results show that the optimal privacy filter significantly improves robustness against inference attacks and outperforms two baseline mechanisms based on additive noise and quantization.}
\end{abstract} 
\end{frontmatter}

\section{Introduction}
\subsection{Motivation}
\textcolor{blue}{Data sharing plays a pivotal role in enabling cooperative decision-making and optimization in dynamic processes. However, the exposure of such data may inadvertently reveal sensitive information\cite{li2024preserving}. Dynamical systems subject to bounded disturbances without knowledge of their underlying distributions provide a natural framework for modeling numerous practical applications involving sensitive information. Despite their importance, the notion of privacy in such systems remains insufficiently explored.} 

\textcolor{blue}{Within the context of set-membership estimation, the states of systems with bounded disturbance can be represented by geometric sets, such as ellipsoids or zonotopes, whose volumes quantify the degree of inference uncertainty. Motivated by these considerations, this paper investigates the notion of volumetric privacy for systems affected by bounded disturbance. We develop privacy-preserving strategies aimed at maximizing an adversary's uncertainty, i.e., the volume of uncertainty set, in inferring private states. The proposed approaches are applicable to both deterministic and stochastic systems, without requiring prior knowledge of the underlying probability distributions.}
\subsection{Related Work}
\textcolor{blue}{Stochastic approaches to privacy primarily include differential privacy and information-theoretic methods.
 Differential privacy (DP) \cite{dwork2014algorithmic} has been incorporated into dynamic settings through differentially private Kalman filtering \cite{le2013differentially}, DP-preserving average consensus via noise injection \cite{mo2016privacy}, and minimal-noise mechanisms for multi-agent systems based on observability properties \cite{zhang2022much}. Recent work \cite{li2024preserving} introduced a trace-based variance–expectation ratio to quantify topology preservation and derived optimal noise designs, while \cite{hassan2019differential} provided a comprehensive survey of DP in dynamical systems. In parallel, information-theoretic approaches quantify privacy leakage using mutual information or conditional entropy. Recent studies include mutual-information-based private filtering for hidden Markov models \cite{cavarec2021designing}, directed-information-based privacy filters for linear systems \cite{tanaka2017directed,nekouei2019information}, and recent extensions to partially observable Markov decision processes addressing privacy-aware estimation and control \cite{molloy2023smoother,weng2025optimal}.}

\textcolor{blue}{Most existing studies focus on deterministic or stochastic systems with unbounded noise and known distributions, leaving privacy protection for systems subject to unknown-but-bounded disturbances relatively unexplored. Recent works \cite{dawoud2023differentially,dawoud2024privacy} developed differentially private set-based estimators using truncated noise, while \cite{khajenejad2023guaranteed} introduced guaranteed privacy concepts and optimization methods for $\mathcal{H}_\infty$-based privacy-preserving interval observers. State-opacity-based methods \cite{saboori2007notions,liu2020verification} ensured indistinguishable outputs between secret and non-secret states but did not quantify the associated estimation uncertainty. Note that set-membership estimators typically characterize uncertainty through bounded geometric sets such as intervals~\cite{jaulin2001interval}, zonotopes~\cite{le2013zonotopes}, or ellipsoids~\cite{chern2005ellips}, where the corresponding set volume naturally describes the amount of estimation uncertainty. Motivated by this, we analyze privacy leakage in systems with private and public states and propose a volumetric approach that maximizes the private-state set volume, thereby enhancing privacy while explicitly accounting for geometric effects under inference attacks.}

\textcolor{blue}{There are some related deterministic approaches to privacy without adding noise, \emph{e.g.}, plausible deniability \cite{monshizadeh2019plausible} and noiseless privacy \cite{farokhi2021noiseless}. In \cite{monshizadeh2019plausible}, privacy leakage in deterministic systems was measured by the volume of reachable state sets, and was determined by the observability. However, this framework does not apply to systems with bounded disturbance, where inference uncertainty depends on both observability and disturbance. Moreover, the problem of privacy filter design was not addressed in \cite{monshizadeh2019plausible}, whereas we propose a concrete design using randomization and optimization. The work in \cite{silvestre2023privacy} addressed parameter privacy in deterministic systems via constrained convex generators (CCGs), which differs from our private state protection setting. Also, defense strategies in \cite{silvestre2023privacy} involve ceasing information sharing or altering parameters, which might be unsuitable for fixed-parameter systems with continuous communication. As discussed in Sec. \ref{Sec.InfCCG}, the complexity of CCG-based inference grows exponentially with time, motivating our use of interval analysis for computational efficiency.}


\textcolor{blue}{Noiseless privacy \cite{farokhi2021noiseless} and non-stochastic privacy \cite{farokhi2019development} employed non-stochastic information-theoretic approaches to limit information leakage, assuming static private-variable domains and without accounting for temporal dependencies in sequential data. While noiseless privacy, non-stochastic privacy, and our volumetric privacy all achieve privacy through the release of bounded outputs, in our setup, dynamical systems subject to bounded disturbance have time-variant private-state reachable sets that can be recursively estimated, enabling dynamic leakage evaluation. Building on this insight, the proposed volumetric privacy filter dynamically evaluates the private state set and adapts the observation accordingly, thereby achieving higher privacy levels with lower data distortion, as shown in Sec. \ref{Sec.NumericalVer}.}

\textcolor{blue}{Finally, other deterministic privacy-preserving approaches often exploit observability reduction or state decomposition to protect private information in multi-agent systems. For example, the authors in \cite{pequito2014design} established a connection between network privacy and its observability space, proposing a privacy-aware communication protocol that achieves average consensus while protecting initial states. The authors in \cite{ramos2024privacy,ramos2023trade} presented privacy-preserving consensus algorithms that incorporate augmented states and novel consensus mechanisms, balancing accuracy, resilience, and privacy	guarantees simultaneously. State decomposition methods, as in \cite{wang2019privacy}, split each node’s state into randomized components to prevent disclosure of individual states during consensus.}
\subsection{Contributions}
\textcolor{blue}{In this paper, we investigate volumetric privacy in dynamical systems subject to bounded disturbances, as illustrated in Fig.~\ref{Fig.SystemSetup}. The system state $S_k$ contains both utility information $X_k$ and private information $Y_k$, while an adversary exploits the observation set $\mathcal{M}_{k|k}^x$ to infer the private information $Y_k$ via the associated uncertainty set $\mathcal{Y}_{k|k}$.}
\begin{figure}[h]
	\centering
	\includegraphics[width=3.0in]{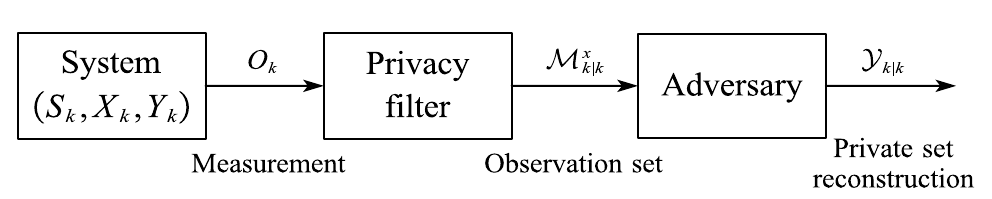}
	\caption{The system setup. }
	\label{Fig.SystemSetup}
\end{figure}

\textcolor{blue}{The primary contribution of this work is the development of an extensible framework for privacy analysis and mitigation in dynamic systems subject to bounded disturbance. This contribution can be summarized in three principal aspects. (1) Volumetric Privacy Measure: We introduce a privacy metric based on the volume of the estimated private-state set obtained via set-membership estimation given the available observations. (2) Privacy Level Computation: We develop computational methods to quantify privacy level and prove that the relevant privacy leakage is bounded by the information gain from the observations. (3) Optimal Privacy Filter: Since the inappropriate choice of the observation set would lead to privacy leakage of the private state, we design a randomized, optimization‐based filter that perturbs and then refines observations to maximize inference uncertainty of attackers. Finally, the proposed framework is demonstrated on a production–inventory case study, showing that our privacy filter significantly reduces the adversary’s capability to estimate the private production rate.}

\subsection{Outline}
\textcolor{blue}{The remainder of the paper is organized as follows. Section~\ref{Sec.ModelMeasure} introduces the system model and defines the volumetric privacy and utility measures. Section~\ref{Sec.InferAttack} formalizes the inference attack, presents computational approaches for evaluating the privacy level, and discusses the properties of the proposed measure. Section~\ref{Sec.PolicyDesign} develops an optimal privacy filter to mitigate privacy leakage while preserving a desired utility level. Numerical results are presented in Section~\ref{Sec.NumericalVer}, followed by concluding remarks in Section~\ref{Sec.Conclusion}.}
\subsection{Notation}
We use italic letters to denote the set of unknown variables, \emph{e.g.}, $\mathcal{X}$ and $\mathcal{Y}$ for $X$ and $Y$. For the non-interval set $\mathcal{Z}$, we use $A\mathcal{Z}$ to denote the set $\left\{AZ | Z\in \mathcal{Z}\right\}$, and use $\mathcal{Z} \oplus \mathcal{R}$ to represent $\left\{Z\!+\!R | Z\in \mathcal{Z}, R \in \mathcal{R}\right\}$. Furthermore, the 1-norm of the column vector $b$ with n dimensions is defined as $\left\|b\right\|_1 = \sum_{i=1}^{n} \abs{b\left(i\right)}$ with the absolute value $\abs{b\left(i\right)}$, and $b^\top$ is the transpose of $b$. The 1-norm of the matrix $A$ is defined as $\left\| A \right\|_1 \!\frac{\Delta}{=} \! \sum_{i,j}^{n,m}{\left|a_{i,j}\right|}$. The vector $\mathbf{1}_{n_x}$ denotes a column vector of ones with $n_x$ dimensions, while $I_{n_x\!\times\! n_x}$ represents an identity matrix of size $n_x \times n_x$. \textcolor{blue}{The operator $\mathrm{diag}(v)$ denotes a diagonal matrix constructed from the vector $v$.}

\section{System Model and Privacy Measure}\label{Sec.ModelMeasure}
\subsection{System model}
\textcolor{blue}{Consider the following stable dynamical system
\begin{align}
	\mathbf{G}: \left\{ 
	\begin{array}{l}
		S_k = f\left(S_{k-1}, W_k\right),\\[3pt]
		O_k = g\left(S_k, V_k\right),
	\end{array}
	\right.
\end{align}
where $S_k$ denotes the system state, with initial condition $S_0$ belonging to a bounded set $\mathcal{S}_0 \subset \mathcal{R}^s$, $O_k$ is the sensor measurement. The unknown disturbances $W_k$ and $V_k$ are assumed to be bounded within the sets $\mathcal{W}_k \subset \mathcal{R}^s$ and $\mathcal{V}_k \subset \mathcal{R}^o$, respectively. The system state $S_k$ contains both private and utility-related information, which are represented as continuous variables
\begin{align}
	Y_k = h(S_k), \qquad X_k = u(S_k),
\end{align}
where $Y_k$ represents the private information that must be kept confidential, and $X_k$ corresponds to the utility information intended to be disclosed. We assume that the adversary has
full knowledge of system model $\mathbf{G}$ and will collect observations of the system to infer the private information.}

\textcolor{blue}{A special case of this model is the following linear non-Gaussian system with invertible $A_1$ and $A_2$,
\begin{align}\label{Eq.LinearSys}
	\mathbf{G}_1:\left\{ \begin{array}{c}
		X_k=A_1X_{k-1}+A_2Y_{k-1}+B_1W_{k}^{x}\\
		Y_k=A_3X_{k-1}+A_4Y_{k-1}+B_2W_{k}^{y}\\
		O_k=X_k\\
	\end{array} \right. ,
\end{align}
where the system state $S_k = [X_k^\top,\, Y_k^\top]^\top$ consists of the public state $X_k\in\mathcal{R}^{n_x}$ and the private state $Y_k\in\mathcal{R}^{n_x}$, and the measurement $O_k$ corresponds to the public state $X_k$.  The initial public and private states belong to $\mathcal{X}_{0|-1}$ and $\mathcal{Y}_{0|-1}$, respectively.}
\subsection{Motivating Examples}
\textcolor{blue}{We next consider two motivating examples to illustrate the necessity of protecting privacy of systems with bounded disturbance.}

\textcolor{blue}{\textbf{Production-inventory system}: In supply chain management \cite{ortega2004control,sana2010production}, the inventory level $X_k$ and the production rate $Y_k$ evolves according to $\mathbf{G}_1$. While firms may disclose inventory information $X_k$ to distributors to boost sales, the production rate $Y_k$ contains sensitive strategic information such as production efficiency and supply chain operations. Since $X_k$ and $Y_k$ are correlated, releasing $X_k$ directly risks revealing private production details. Therefore, it is necessary to transform or mask observations to preserve the privacy of $Y_k$ while maintaining the utility of public inventory data $X_k$.}

\textcolor{blue}{\textbf{Traffic management system}: In intelligent transportation, vehicles may report their velocities to a central controller to optimize traffic flow, \emph{e.g.}, by adjusting the speed limit on highways. Given bounded disturbances from environmental factors like uneven ground, the vehicle dynamics fit the model $\mathbf{G}_1$ with unknown-but-bounded disturbance. Here, velocity $X_k$ can be considered public data used for traffic management, while position $Y_k$ is private, as it can be used to identify individual vehicles. To protect location privacy, vehicles may intentionally report blurred or randomized velocity observations that preserve system utility but reduce the risk of precise location inference.}
\subsection{Privacy and Utility Measures}\label{Sec.PrivacyUtilityMeasure}
\textcolor{blue}{Notably, in differential privacy for dynamical systems, the ranges of utility and private information are typically assumed to be time-invariant and unbounded, and privacy is characterized by probabilistic indistinguishability. However, for systems subject to bounded disturbances, the utility and private information can be represented by bounded uncertainty sets that are updated online using set-membership estimation. Accordingly, it is desirable to introduce a privacy measure tailored to such set-based descriptions.}

\textcolor{blue}{Given the sensor measurements, an adversary can employ set-membership estimation techniques, \emph{e.g.}, \cite{le2013zonotopes,chern2005ellips}, to estimate both the utility and private information by constructing uncertainty sets $\mathcal{X}_{k|k}$ and $\mathcal{Y}_{k|k}$ that contain the true values $X_k = x_k$ and $Y_k = y_k$, respectively. We would further provide numerical approaches to illustrate how the adversary estimates $\mathcal{X}_{k|k}$ and $\mathcal{Y}_{k|k}$ in next section. This inference procedure, referred to as an \emph{inference attack}, results in privacy leakage, as the adversary's confidence in the private information increases when the uncertainty set $\mathcal{Y}_{k|k}$ shrinks. In the extreme case where $\mathcal{Y}_{k|k} = \{y_k\}$, the private value is fully revealed. Since $Y_k$ resides in a continuous space, the cardinality of $\mathcal{Y}_{k|k}$ is not meaningful, motivating a geometric approach to quantify uncertainty.}

\textcolor{blue}{We therefore employ the volume of the uncertainty set as a quantitative measure of privacy:
\begin{equation}
	\mathrm{Vol}\!\left(\mathcal{Y}_{k|k}\right) = \int_{\mathcal{Y}_{k|k}} \mathrm{d}y,
\end{equation}
where $\mathrm{Vol}(\mathcal{Y}_{k|k})$ denotes the Lebesgue measure of the set $\mathcal{Y}_{k|k} \subseteq \mathcal{R}^{n_y}$. The privacy measure at time $k$ is then defined as
\begin{equation}
	P_k\!\left(\mathcal{Y}_{k|k}\right) := \mathrm{Vol}\!\left(\mathcal{Y}_{k|k}\right).
	\label{eq:privacy_measure}
\end{equation}
A smaller volume corresponds to reduced privacy, while $\mathrm{Vol}(\mathcal{Y}_{k|k}\!)\!=\!0$ indicates complete exposure.}

\textcolor{blue}{Similarly, the utility information can be represented by the uncertainty set $\mathcal{X}_{k|k}$. Larger $\mathcal{X}_{k|k}$ implies greater uncertainty in recovering the true public information $X_k = x_k$, resulting in higher distortion. Accordingly, we define the utility measure as
\begin{equation}
	U_k\!\left(\mathcal{X}_{k|k}\right) := \frac{1}{\mathrm{Vol}\!\left(\mathcal{X}_{k|k}\right)},
	\label{eq:utility_measure}
\end{equation}
so that higher $U_k$ corresponds to better utility. Throughout, we assume that all relevant uncertainty sets are measurable with finite, positive volume.}

\textcolor{blue}{In the following, we analyze the proposed volumetric privacy measure in the context of the linear non-Gaussian system $\mathbf{G}_1$, and we develop practical approximations for evaluating the resulting privacy leakage. Based on this analysis, we design a privacy filter that mitigates leakage while satisfying utility requirements. A extension study of volumetric privacy for general nonlinear systems is left for future work.}
\section{Inference Attack for Linear Systems} \label{Sec.InferAttack}
\subsection{Inference Attack via Set Operations}
\textcolor{blue}{We assume that the adversary observes a set of public states, denoted by $\mathcal{M}_{k|k}^x$, which is generated by the privacy filter, as shown in Fig. \ref{Fig.SystemSetup}. The observation set contains the true public state $X_k = x_k$ along with additional elements intended to obfuscate the adversary's estimation of the private states. The adversary then performs the inference attack by identifying all possible values of the private state that are consistent with $\mathcal{M}_{k|k}^x$, thereby constructing its uncertainty set. In the following, we define the inference attack recursively.}  

\textcolor{blue}{At time $k$, given the public state set $\mathcal{X}_{k-1|k-1}$ and the uncertainty private state set $\mathcal{Y}_{k-1|k-1}$, the set of states can be predicted based on the system model \eqref{Eq.LinearSys}, i.e.,
\begin{align} 
		\mathcal{X} _{k|k-1}&=A_1\mathcal{X} _{k-1|k-1}\oplus A_2\mathcal{Y} _{k-1|k-1}\oplus B_1\mathcal{W} _{k}^{x}, \label{Eq.calX1} \\
		\mathcal{Y} _{k|k-1}&=A_3\mathcal{X} _{k-1|k-1}\oplus A_4\mathcal{Y} _{k-1|k-1}\oplus B_2\mathcal{W} _{k}^{y}. \label{Eq.calY1}
\end{align}
After receiving the observation set of the public state $\mathcal{M}^x_{k|k} \subseteq \mathcal{X} _{k|k-1}$, the adversary extracts new information from $\mathcal{M}^x_{k|k}$ and updates the uncertainty sets of $X_{k-1}$ and $Y_{k-1}$ via the following steps,
\begin{align} 
	\!\!\!\!\mathcal{M} _{k\!-\!1|k}^{x}\!\!=\!A_{1}^{\!-\!1} \!\mathcal{M} _{k|k}^{x}&\!\oplus\! \left(\! -\!A_{1}^{\!-\!1}\!A_2 \!\right)\! \mathcal{Y} _{k\!-\!1|k\!-\!1}\!\oplus \!\left( -\!A_{1}^{\!-\!1}\!B_1 \!\right) \!\mathcal{W} _{k}^{x} \!,\!\! \label{Eq.exactYEst0} \\
	\!\!\!\!\mathcal{M} _{k\!-\!1|k}^{y}\!=\!A_{2}^{\!-\!1} \! \mathcal{M} _{k|k}^{x}&\!\oplus\! \left(\! -A_{2}^{\!-\!1}\!A_1 \!\right)\! \mathcal{X} _{k\!-\!1|k\!-\!1}\!\oplus \!\left(\! -A_{2}^{\!-\!1}\!B_1 \!\right)\! \mathcal{W} _{k}^{x} \!,\!\! \label{Eq.exactYEst1} \\
	\mathcal{X} _{k-1|k}&=\mathcal{M} _{k-1|k}^{x}\cap \mathcal{X} _{k-1|k-1},  \label{Eq.exactYEst2} \\
	\mathcal{Y} _{k-1|k}&=\mathcal{M} _{k-1|k}^{y}\cap \mathcal{Y} _{k-1|k-1}, \label{Eq.exactYEst3}
\end{align}
where it first computes the possible sets of the public and private states, i.e., $\mathcal{M} _{k-1|k}^{x}$ and $\mathcal{M} _{k-1|k}^{y}$, based on the system model \eqref{Eq.LinearSys} and the observation $\mathcal{M}^x_{k|k}$ in \eqref{Eq.exactYEst0} and \eqref{Eq.exactYEst1}, and then reduces the uncertainty sets of $X_{k-1}$ and $Y_{k-1}$ via intersection operations in \eqref{Eq.exactYEst2} and \eqref{Eq.exactYEst3}.}

According to the system dynamics \eqref{Eq.LinearSys}, the adversary estimates the uncertainty set of $Y_k$ via the following forward inference,
\begin{align} \label{Eq.exactYEst4}
	\mathcal{Y} _{k|k}=A_3\mathcal{X} _{k-1|k}\oplus A_4\mathcal{Y} _{k-1|k}\oplus B_2\mathcal{W} _{k}^{y}.
\end{align}
Finally, the public state set can be further calibrated via,
\begin{align} 
	&\mathcal{X}_{k|k}=\mathcal{M}^x _{k|k}\cap \mathcal{M}^x_{k|k-1}, \label{Eq.exactYEst5} \\
	\mathcal{M}^x_{k|k-1}=&A_1\mathcal{X} _{k-1|k}\oplus A_2\mathcal{Y} _{k-1|k}\oplus B_1\mathcal{W} _{k}^{x}, \label{Eq.exactYEst6}
\end{align}
where $\mathcal{M}^x_{k|k-1}$ is the predicted uncertainty set of $X_k$ based on the calibrated sets $\mathcal{X} _{k-1|k}$ and $\mathcal{Y} _{k-1|k}$.

Starting from $k=0$, with the initial uncertainty sets $\mathcal{X}_{0|-1}$ and $\mathcal{Y}_{0|-1}$, the adversary can recursively update the uncertainty sets of $X_k$ and $Y_k$ via the backward calibration \eqref{Eq.exactYEst0}-\eqref{Eq.exactYEst3}, and the forward inference \eqref{Eq.exactYEst4}-\eqref{Eq.exactYEst6}. The backward calibration \eqref{Eq.exactYEst0}-\eqref{Eq.exactYEst3} reduces the uncertainty of $X_{k-1}$ and $Y_{k-1}$, which leads to the following proposition. 
\begin{prop}\label{Pro.InfPro}
	For any $k \geqslant 1$, the adversary's uncertainty set for the private state \eqref{Eq.exactYEst4} is a subset of its corresponding prediction set \eqref{Eq.calY1}, i.e., $\mathcal{Y}_{k|k} \subseteq \mathcal{Y}_{k|k-1}$. Moreover, given uncertainty sets $\mathcal{X}_{k-1|k-1}$ and $\mathcal{Y}_{k-1|k-1}$ that contain the true system states $X_{k-1} \!=\! x_{k-1}$ and $Y_{k-1} \!=\! y_{k-1}$, if the observation set $\mathcal{M}^x_{k|k}$ contains the true public state $X_{k} \!=\! x_k$, then the inference set $\mathcal{Y}_{k|k}$ contains the true private state $Y_{k} \!= \!y_k$.
\end{prop}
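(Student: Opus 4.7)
The plan is to split the statement into its two parts and handle them separately, with the inclusion claim relying on the monotonicity of Minkowski sums and linear image maps, and the soundness claim relying on the invertibility of $A_1, A_2$ to ``undo'' the dynamics.

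\textbf{Part 1 (shrinkage): $\mathcal{Y}_{k|k}\subseteq \mathcal{Y}_{k|k-1}$.} First I would write both sets side by side,
\begin{align*}
\mathcal{Y}_{k|k-1} &= A_3\mathcal{X}_{k-1|k-1}\oplus A_4\mathcal{Y}_{k-1|k-1}\oplus B_2\mathcal{W}_k^{y},\\
\mathcal{Y}_{k|k}   &= A_3\mathcal{X}_{k-1|k}\oplus A_4\mathcal{Y}_{k-1|k}\oplus B_2\mathcal{W}_k^{y},
\end{align*}
and observe from \eqref{Eq.exactYEst2}--\eqref{Eq.exactYEst3} that $\mathcal{X}_{k-1|k}\subseteq \mathcal{X}_{k-1|k-1}$ and $\mathcal{Y}_{k-1|k}\subseteq \mathcal{Y}_{k-1|k-1}$, since both are obtained by intersection with the corresponding prior set. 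The conclusion then follows from the elementary fact that the map $(\mathcal{A},\mathcal{B})\mapsto M_1\mathcal{A}\oplus M_2\mathcal{B}\oplus \mathcal{C}$ is monotone (inclusion-preserving) in each argument, which is immediate from the definitions of linear image and Minkowski sum. This part is essentially bookkeeping.

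\textbf{Part 2 (soundness): the true $y_k$ lies in $\mathcal{Y}_{k|k}$.} This is the substantive step. By \eqref{Eq.LinearSys}, there exist (true) disturbances $w_k^{x}\in\mathcal{W}_k^{x}$ and $w_k^{y}\in\mathcal{W}_k^{y}$ such that
\begin{align*}
x_k &= A_1 x_{k-1}+A_2 y_{k-1}+B_1 w_k^{x},\\
y_k &= A_3 x_{k-1}+A_4 y_{k-1}+B_2 w_k^{y}.
\end{align*}
In view of the structure of $\mathcal{Y}_{k|k}$ in \eqref{Eq.exactYEst4}, it suffices to prove the two intermediate inclusions $x_{k-1}\in\mathcal{X}_{k-1|k}$ and $y_{k-1}\in\mathcal{Y}_{k-1|k}$. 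Invertibility of $A_1$ lets me solve the first equation as $x_{k-1}=A_1^{-1}x_k - A_1^{-1}A_2 y_{k-1} - A_1^{-1}B_1 w_k^{x}$; since by hypothesis $x_k\in\mathcal{M}_{k|k}^{x}$, $y_{k-1}\in\mathcal{Y}_{k-1|k-1}$ and $w_k^{x}\in\mathcal{W}_k^{x}$, matching term-by-term with \eqref{Eq.exactYEst0} gives $x_{k-1}\in\mathcal{M}_{k-1|k}^{x}$. Combined with $x_{k-1}\in\mathcal{X}_{k-1|k-1}$ and \eqref{Eq.exactYEst2}, this yields $x_{k-1}\in\mathcal{X}_{k-1|k}$. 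The symmetric argument, using invertibility of $A_2$ to write $y_{k-1}=A_2^{-1}x_k - A_2^{-1}A_1 x_{k-1} - A_2^{-1}B_1 w_k^{x}$ and matching against \eqref{Eq.exactYEst1}, gives $y_{k-1}\in\mathcal{M}_{k-1|k}^{y}$, and then \eqref{Eq.exactYEst3} together with $y_{k-1}\in\mathcal{Y}_{k-1|k-1}$ delivers $y_{k-1}\in\mathcal{Y}_{k-1|k}$. Plugging these back into the equation for $y_k$ shows $y_k\in A_3\mathcal{X}_{k-1|k}\oplus A_4\mathcal{Y}_{k-1|k}\oplus B_2\mathcal{W}_k^{y}=\mathcal{Y}_{k|k}$.

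\textbf{Main obstacle.} The only non-routine step is the backward-consistency argument in Part 2: one must recognize that the ``information'' carried by the new observation $\mathcal{M}_{k|k}^{x}$ must be propagated \emph{backwards} through the dynamics to refine the past state sets before the forward inference \eqref{Eq.exactYEst4} can be invoked. This is exactly where invertibility of $A_1$ and $A_2$ enters, and it is also the place where one has to be careful to match each term of the true-state expansion with the corresponding summand in the definitions \eqref{Eq.exactYEst0}--\eqref{Eq.exactYEst1}. Everything else is monotonicity of Minkowski sums and linear images, which I would state once and reuse.
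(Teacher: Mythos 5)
Your proof is correct and follows essentially the same route as the paper's: backward consistency of the true states with $\mathcal{M}_{k-1|k}^{x}$ and $\mathcal{M}_{k-1|k}^{y}$ via the inverted dynamics, membership preserved under the intersections \eqref{Eq.exactYEst2}--\eqref{Eq.exactYEst3}, forward propagation through \eqref{Eq.exactYEst4}, and monotonicity of linear images and Minkowski sums for the inclusion $\mathcal{Y}_{k|k}\subseteq\mathcal{Y}_{k|k-1}$. The only difference is that you spell out the term-by-term matching with \eqref{Eq.exactYEst0}--\eqref{Eq.exactYEst1} explicitly, which the paper leaves implicit.
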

\textcolor{blue}{\begin{proof}
	Since $\mathcal{M}^x_{k|k}$ contains $x_{k}$, it follows from \eqref{Eq.exactYEst0} that $x_{k-1} \!\in\! \mathcal{M}^x_{k-1|k}$. As $x_{k-1}$ also belongs to $\mathcal{X}_{k-1|k-1}$, their intersection $\mathcal{X}_{k-1|k}$ necessarily contains $x_{k-1}$. By the same reasoning, $y_{k-1} \!\in\! \mathcal{Y}_{k-1|k}$. Propagating through the system dynamics $\mathbf{G}_1$ yields $y_{k} \!\in \!\mathcal{Y}_{k|k}$. Finally, since $\mathcal{X}_{k-1|k} \!\subseteq\! \mathcal{X}_{k-1|k-1}$ and $\mathcal{Y}_{k-1|k} \!\subseteq\! \mathcal{Y}_{k-1|k-1}$, we have $\mathcal{Y}_{k|k} \!\subseteq \!\mathcal{Y}_{k|k-1}$.
\end{proof}}
According to Proposition \ref{Pro.InfPro}, the adversary can reduce its uncertainty of the private state via the inference attack since it can obtain a smaller uncertainty private state set $\mathcal{Y}_{k|k}$ that contains the actual private state $y_k$. In particular, if the uncertainty set $\mathcal{Y}_{k|k}$ contains only one element, then the adversary can obtain the actual private state. 

\textcolor{blue}{As addressed in existing set-membership estimation approaches~\cite{le2013zonotopes,chern2005ellips}, the set operations involved in inference attacks can be computationally expensive. To mitigate this complexity, uncertainty sets are often restricted to specific geometric forms, enabling more efficient implementation of set operations. However, more complex representations generally incur higher computational costs in volume evaluation. In the following, we present two approximation methods for implementing inference attacks and analyze their computational complexity, based on which we establish properties of volumetric privacy.}
\subsection{Inference Attack Approximation via CCGs}\label{Sec.InfCCG}
\textcolor{blue}{In this subsection, we show that the inference attack can be approximated using the constrained convex generator (CCG), a general set representation proposed in \cite{silvestre2021constrained}.
\begin{defn}[CCG Representation]\cite{silvestre2021constrained}
	The constrained convex generator $\mathcal{Z} =\left( G,c,A,b,\mathcal{C} \right) \subset \mathcal{R}^n$ is defined as
	\begin{align}
		\mathcal{Z} =\left\{ G\xi +c: A\xi =b, \xi \in \mathcal{C} \right\} , \nonumber
	\end{align}
	where $G\in \mathcal{R} ^{n\times n_g}$, $c\in \mathcal{R} ^n$, $A\in \mathcal{R} ^{n_c\times n_g}$, $b\in \mathcal{R} ^{n_c}$ and $\mathcal{C} =\left\{ \mathcal{C} _1,\mathcal{C} _2,\dots ,\mathcal{C} _{n_p} \right\} $, and $\mathcal{C} _i\subset \mathcal{R} ^{m_i}$ are convex sets with $\sum_{i=1}^{n_p}{m_i}=n_g$. 
\end{defn}
The CCG encompasses a wide range of useful set representations, including zonotopes, ellipsoids, and intervals~\cite{silvestre2021constrained}. Moreover, common set operations such as the Minkowski sum and intersection admit analytical expressions, enabling its application to approximate the inference attack.
\begin{prop}\cite{silvestre2021constrained}\label{Pro.CCGCompu}
	Given CCGs $\mathcal{X}\! =\!\left( G_x,c_x,A_x,b_x,\mathcal{C}_x \right) \!\subset\! \mathcal{R}^n$ and $\mathcal{Y}\! =\!\left( G_y,c_y,A_y,b_y,\mathcal{C}_y \right) \!\subset\! \mathcal{R}^n$, and a matrix $R\in\mathcal{R}^{m\times n}$, we have
	\scriptsize
	\begin{align}
		&R\mathcal{X} =\left( RG_x,Rc_x,A_x,b_x,\mathcal{C} _x \right) , \nonumber\\ 
		\mathcal{X} \!\oplus\! \mathcal{Y} &\!=\!\left(\! \left[ \begin{matrix}
			G_x&	\!	G_y\\
		\end{matrix} \right] \!, c_x\!+\!c_y, \mathrm{diag}\left(\! \left[ \begin{matrix}
			A_x&	\!	A_y\\
		\end{matrix} \right] \!\right) \!,\left[\! \begin{array}{c}
			b_x\\
			b_y\\
		\end{array} \!\right] \!, \left\{ \mathcal{C} _x,\mathcal{C} _y \right\} \!\!\right) \!, \nonumber \\ 
		\mathcal{X} \cap \mathcal{Y} &=\left( \left[ \begin{matrix}
			G_x&		0\\
		\end{matrix} \right] , c_x, \left[ \begin{matrix}
			A_x&		0\\
			0&		A_y\\
			G_x&		-G_y\\
		\end{matrix} \right] , \left[ \begin{array}{c}
			b_x\\
			b_y\\
			c_y-c_x\\
		\end{array} \right] , \left\{ \mathcal{C} _x,\mathcal{C} _y \right\} \right) . \nonumber
	\end{align}
\end{prop}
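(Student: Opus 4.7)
The plan is to verify each of the three identities by unfolding the CCG definition and exhibiting an explicit correspondence between the parametric descriptions on either side. Since a CCG is fully specified by the parametrization $\{G\xi + c : A\xi = b, \xi \in \mathcal{C}\}$, each identity reduces to showing that an element of the set on the left-hand side is realized by exactly those generator vectors $\xi$ satisfying the constraints on the right-hand side, and vice versa. No optimization or convex-analytic machinery is needed; the arguments are constructive and algebraic.

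For the affine map $R\mathcal{X}$, I would substitute the parametrization $x = G_x\xi + c_x$ into $Rx$ to obtain $Rx = (RG_x)\xi + Rc_x$, while the constraints $A_x\xi = b_x$ and $\xi \in \mathcal{C}_x$ carry over unchanged; this yields the claimed tuple directly. For the Minkowski sum $\mathcal{X}\oplus\mathcal{Y}$, the idea is to stack the two independent generator vectors into $\xi = [\xi_x^\top,\xi_y^\top]^\top$. Writing $x+y = G_x\xi_x + G_y\xi_y + (c_x+c_y)$ produces the block-row generator $[G_x,\,G_y]$ and offset $c_x+c_y$; the two decoupled equality constraints combine into the block-diagonal constraint $\mathrm{diag}([A_x,\,A_y])\xi = [b_x^\top,b_y^\top]^\top$; and the independent memberships $\xi_x\in\mathcal{C}_x$, $\xi_y\in\mathcal{C}_y$ are captured by the product-set description $\{\mathcal{C}_x,\mathcal{C}_y\}$. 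The verification amounts to checking that any pair of feasible witnesses $(\xi_x,\xi_y)$ corresponds uniquely to a feasible stacked $\xi$ and conversely.

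For the intersection $\mathcal{X}\cap\mathcal{Y}$, the key step is to represent each $z$ in the intersection by a pair of witnesses $(\xi_x,\xi_y)$ with $z = G_x\xi_x + c_x = G_y\xi_y + c_y$, and then enforce the coupling $G_x\xi_x - G_y\xi_y = c_y - c_x$ as an additional equality row appended to the block-diagonal constraints from each factor. Taking the output as $z = G_x\xi_x + c_x$ explains why the generator matrix takes the block form $[G_x,\,0]$ and the offset is $c_x$. Both inclusions then follow: any $z$ produced by the constructed CCG is realizable in each factor set thanks to the coupling equation, and any $z\in\mathcal{X}\cap\mathcal{Y}$ admits witnesses satisfying all constraints by definition.

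The main obstacle here is essentially notational rather than mathematical: correctly specifying the block structure of the matrices so that constraints decouple in the Minkowski case yet couple appropriately in the intersection case, and being careful that the product description $\{\mathcal{C}_x,\mathcal{C}_y\}$ is interpreted as a Cartesian-type product in the CCG convention. Once the bijection between parametrizations is written down cleanly, each identity follows from elementary set manipulation, which is why the result can be stated compactly as in \cite{silvestre2021constrained}.
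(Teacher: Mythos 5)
Your verification is correct: unfolding the parametrization $\{G\xi+c : A\xi=b,\ \xi\in\mathcal{C}\}$ and matching witnesses term by term (carrying constraints unchanged for the affine map, stacking independent generators for the Minkowski sum, and appending the coupling row $G_x\xi_x-G_y\xi_y=c_y-c_x$ for the intersection) is exactly the standard argument. Note that the paper itself states this proposition as a cited result from \cite{silvestre2021constrained} and gives no proof, so there is nothing internal to compare against; your sketch reproduces the derivation of the cited reference and contains no gaps.
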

\normalsize
According to the computation rules in Proposition~\ref{Pro.CCGCompu}, the inference attack from \eqref{Eq.exactYEst0} to \eqref{Eq.exactYEst6} can be directly implemented, if we assume that the uncertainty sets $\mathcal{W}^x_k$, $\mathcal{W}^y_k$, $\mathcal{X}_{0|-1}$, and $\mathcal{Y}_{0|-1}$ are represented as CCGs. However, as shown in the next lemma, the computational complexity of CCG-based inference grows exponentially over time.}
\textcolor{blue}{\begin{prop}
	The computational complexity of the CCG-based inference attack at time $k$ is at least $\mathcal{O}\!\left(c^{k-1}n^3\right)$ for some constant $c>1$, and both the column dimension of the generator matrices $G$ and the number of constraints grow exponentially with $k$.
\end{prop}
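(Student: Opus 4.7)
The plan is to track two structural parameters of the CCG representations produced by the inference cycle \eqref{Eq.exactYEst0}--\eqref{Eq.exactYEst6}: the column dimension $n_g(k)$ of the generator matrix and the row count $n_c(k)$ of the equality-constraint block of $\mathcal{X}_{k|k}$ and $\mathcal{Y}_{k|k}$. The operation rules in Proposition~\ref{Pro.CCGCompu} give a transparent accounting: a linear image $R\mathcal{Z}$ preserves both parameters, a Minkowski sum $\mathcal{X}\oplus\mathcal{Y}$ concatenates generators and equality-constraint rows so that $n_g$ and $n_c$ each sum, and an intersection $\mathcal{X}\cap\mathcal{Y}$ again concatenates generators, sums $n_c$, and appends an extra $n$-row block $[G_x,-G_y]$. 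I would use these rules as a symbolic bookkeeping device throughout the entire update cycle.

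Next I would apply this bookkeeping to \eqref{Eq.exactYEst0}--\eqref{Eq.exactYEst6}. The Minkowski sum defining $\mathcal{M}^{x}_{k-1|k}$ has a generator count equal to the sum of those of $\mathcal{M}^{x}_{k|k}$, $\mathcal{Y}_{k-1|k-1}$, and $\mathcal{W}^{x}_{k}$; intersecting with $\mathcal{X}_{k-1|k-1}$ in \eqref{Eq.exactYEst2} then further adds the columns of $\mathcal{X}_{k-1|k-1}$ and $n$ additional equality-constraint rows. An identical accounting holds for $\mathcal{Y}_{k-1|k}$ via \eqref{Eq.exactYEst3}. Feeding these two updated sets through the forward Minkowski sum \eqref{Eq.exactYEst4} yields
\begin{align}
n_g^Y(k) \;\geq\; 2\,n_g^X(k-1) + 2\,n_g^Y(k-1) + \mathcal{O}(1), \nonumber
\end{align}
and a symmetric lower bound on $n_g^X(k)$ follows from \eqref{Eq.exactYEst5}--\eqref{Eq.exactYEst6}. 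Arranging these into a vector recurrence $\bigl(n_g^X(k),\,n_g^Y(k)\bigr)^{\!\top} \geq M\bigl(n_g^X(k-1),\,n_g^Y(k-1)\bigr)^{\!\top}$ with a non-negative $2\times 2$ matrix $M$ whose Perron eigenvalue $c \geq 2$ strictly exceeds one, I would conclude $n_g^{X}(k),\, n_g^{Y}(k) = \Omega(c^{k})$; the same template bounds $n_c^X(k)$ and $n_c^Y(k)$ from below, establishing the claimed exponential growth of generators and constraints.

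The complexity conclusion then follows by combining the per-cycle work with this growth. Each linear pre-multiplication by an $n\times n$ matrix acting on the current generator incurs $\Omega(n^2 n_g(k))$ arithmetic operations, while the matrix inversions and products $A_1^{-1}$, $A_2^{-1}$, $A_1^{-1}A_2$, $A_2^{-1}A_1$ invoked in \eqref{Eq.exactYEst0}--\eqref{Eq.exactYEst1} each cost $\Omega(n^3)$; together with $n_g(k)=\Omega(c^{k-1})$, this yields the lower bound $\Omega(c^{k-1} n^3)$ per step. The hard part will be ruling out implicit CCG simplifications, such as redundant-generator pruning or constraint reduction, that could in principle short-circuit the exponential accumulation; since Proposition~\ref{Pro.CCGCompu} exposes only the three primitive operations above and no reduction step is available within that calculus, the recurrence lower bounds on $n_g(k)$ and $n_c(k)$ are tight up to constant factors and transfer directly to runtime.
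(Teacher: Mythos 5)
Your proposal is correct and follows essentially the same route as the paper: both track how the Minkowski-sum and intersection rules of Proposition~\ref{Pro.CCGCompu} concatenate generator columns and constraint rows, deduce geometric growth of these counts across one inference cycle, and multiply the resulting $\Omega(c^{k-1})$ column dimension by the $\mathcal{O}(mn^2)$ cost of each matrix--generator product. Your explicit vector recurrence and Perron-eigenvalue argument simply make rigorous the paper's one-line claim that one step inflates the generator width by a factor $c>1$; the only cosmetic gap is that the final $n^3$ requires the initial generator counts to scale with $n$ (as the paper assumes explicitly), since $n_g(k)=\Omega(c^{k-1})$ alone would only give $\Omega(c^{k-1}n^2)$.
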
}
\begin{proof}
	\textcolor{blue}{The dominant operation in the inference attack from \eqref{Eq.exactYEst0} to \eqref{Eq.exactYEst6} is the multiplication of an $n\times n$ matrix with an $n\times m$ matrix, which has computational complexity $\mathcal{O}(mn^2)$.  
	For simplicity, we assume that at time $k$ the sets $\mathcal{X}_{k-1|k-1}$, $\mathcal{Y}_{k-1|k-1}$, $\mathcal{W}^x_{k}$, $\mathcal{W}^y_{k}$, and $\mathcal{M}^x_{k|k}$ all have generator matrices $G$ of size $n\times n$ and are described by $n$ constraints.  }
	
	\textcolor{blue}{According to Proposition~\ref{Pro.CCGCompu}, after one inference step, the Minkowski sum and intersection operations cause the generator matrix in $\mathcal{X}_{k|k}$ to grow to size $n\times (c\,n)$ for some constant $c>1$, while the number of constraints increases by a factor $d>1$.  
	Thus, both the column dimension of $G$ and the number of constraints grow exponentially with~$k$.  
	Consequently, due to the exponentially increasing column dimension, the computational complexity of matrix multiplications in the inference attack at time~$k$ is at least $\mathcal{O}(c^{k-1}n^3)$.}
\end{proof}
\textcolor{blue}{The high computational complexity of CCG-based inference renders real-time implementation of the inference attack and the privacy filter design in Sec.~\ref{Sec.PolicyDesign} intractable for large~$k$.  
Order-reduction techniques can be employed to reduce this complexity, albeit at the cost of some loss in inference accuracy.  
However, such techniques also complicate the analysis of the proposed volumetric privacy metric.  
For clarity and focus, we defer a detailed discussion of these techniques to future work.}
\subsection{Inference Attack Approximation via Interval Analysis}\label{Sec.InfInt}
\textcolor{blue}{We next consider an interval-based approximation approach for computing the inference sets. This approach can be viewed as a special case of the CCG-based inference, but it significantly reduces both computational and analytical complexity due to the efficiency of interval arithmetic.
\begin{defn}[Interval Representation]
	An interval $\mathcal{X} = \{ X \mid \underline{X} \leq X \leq \overline{X} \}$ is equivalently represented as $\mathcal{X} =\left[ \begin{matrix}
		\underline{X}^{\top}&		\overline{X}^{\top}\\
	\end{matrix} \right]^{\top} $,	where $\underline{X}$ and $\overline{X}$ are the lower and upper bounds, respectively. Equivalently, an interval can be expressed as a special case of the CCG representation,
	\begin{equation}
		\mathcal{X} = \left\{ \mathrm{diag}(p^x) \,\xi + c^x \;:\; \xi \in \mathbb{R}^{n_x},\; \|\xi\|_\infty \leq 1 \right\}, \nonumber
	\end{equation}
	with center $c^x = (\overline{X} + \underline{X})/2$ and radius $p^x = (\overline{X} - \underline{X})/2$. The volume of $\mathcal{X}$ is given by $
		\mathrm{Vol}(\mathcal{X}) = \prod_{i=1}^{n_x} \big(\overline{X}(i) - \underline{X}(i)\big) = \prod_{i=1}^{n_x} 2p^x(i)$, where $\overline{X}(i)$ and $\underline{X}(i)$ denote the upper and lower bounds of the $i$-th dimension, respectively.
\end{defn}}

\textcolor{blue}{To reduce computational complexity, a surrogate measure for the size of $\mathcal{X}$ can be defined as the total length across all dimensions $	\overline{\mathrm{Vol}}(\mathcal{X}) = \sum_{i=1}^{n_x} \big(\overline{X}(i) - \underline{X}(i)\big) = \sum_{i=1}^{n_x} 2p^x(i)$,
which is linear in the upper and lower bounds and therefore easier to compute. By the inequality of arithmetic and geometric means, this surrogate measure bounds the geometric volume, i.e.,  $\overline{\mathrm{Vol}}(\mathcal{X})/n \geq \sqrt[n]{\mathrm{Vol}(\mathcal{X})}$. Consequently, maintaining a large $\mathcal{X}$ implies a correspondingly large $\overline{\mathrm{Vol}}(\mathcal{X})$, making it a suitable surrogate metric.}

\textcolor{blue}{The following operations on intervals are defined consistently with this representation. Given a block matrix $A = [A_1,\,A_2]$, multiplication with an interval is$A \mathcal{X} = A_1 \underline{X} + A_2 \overline{X}$. For intervals $\mathcal{X}$ and $\mathcal{Y}$, their Minkowski sum is  $\mathcal{X} \oplus \mathcal{Y} = 
\begin{bmatrix}
	\underline{X} + \underline{Y} \\[0.3em]
	\overline{X} + \overline{Y}
\end{bmatrix}$, and their difference, used only for volume evaluation, is $
\mathcal{X} \setminus \mathcal{Y} =
\begin{bmatrix}
	\underline{X} - \underline{Y} \\[0.3em]
	\overline{X} - \overline{Y}
\end{bmatrix}$.
The intersection of $\mathcal{X}$ and $\mathcal{Y}$ is $\mathcal{X} \cap \mathcal{Y} =
\begin{bmatrix}
	\max\{\underline{X},\underline{Y}\} \\[0.3em]
	\min\{\overline{X},\overline{Y}\}
\end{bmatrix}$. These operations are considerably simpler to compute than the corresponding operations for CCGs described in Proposition~\ref{Pro.CCGCompu}.}

\textcolor{blue}{We now assume that the uncertainty sets $\mathcal{W}^x_k$, $\mathcal{W}^y_k$, $\mathcal{X}_{0|-1}$, and $\mathcal{Y}_{0|-1}$ are represented as intervals.  }
Under this assumption, the interval-based inference attack can be implemented using the following lemma.
\begin{lem} \label{Lm.tightestInferY}
	The recursive inference interval from \eqref{Eq.exactYEst0} to \eqref{Eq.exactYEst3} can be computed via
	\begin{align}
		\mathcal{M} _{k-1|k}^{x}&=\Psi \left( A_{1}^{-1} \right) \mathcal{M}^x_{k|k} \oplus  \Psi \left( -A_{1}^{-1}A_2 \right) \mathcal{Y}_{k-1|k-1} \nonumber\\&\oplus \Psi \left( -A_{1}^{-1}B_1 \right) \mathcal{W}_{k|k}^x  , \label{Eq.TightInterval1} \\
		\mathcal{M} _{k-1|k}^{y}&=\Psi \left( A_{2}^{-1} \right) \mathcal{M}^x_{k|k} \oplus \Psi \left( -A_{2}^{-1}A_1 \right) \mathcal{X}_{k-1|k-1}\nonumber\\&\oplus \Psi \left( -A_{2}^{-1}B_1 \right) \mathcal{W}_{k|k}^x   , \label{Eq.TightInterval2}\\
		\mathcal{X} _{k-1|k}&=\left[ \begin{array}{c}
			\max \left\{ \underline{M}_{k-1|k}^{x},\underline{X}_{k-1|k-1} \right\}\\
			\min \left\{ \overline{M}_{k-1|k}^{x},\overline{X}_{k-1|k-1} \right\}\\
		\end{array} \right] , \label{Eq.TightInterval3}\\
		\mathcal{Y} _{k-1|k}&=\left[ \begin{array}{c}
			\max \left\{ \underline{M}_{k-1|k}^{y},\underline{Y}_{k-1|k-1} \right\}\\
			\min \left\{ \overline{M}_{k-1|k}^{y},\overline{Y}_{k-1|k-1} \right\}\\
		\end{array} \right]  , \label{Eq.TightInterval4}
	\end{align}
	\begin{align}
		\!\!\!\!\!\!\mathcal{M} _{k|k-1}^{x}\!=\!\Psi\!& \left( A_1 \right)\! \mathcal{X} _{k-1|k}\!\oplus \!\Psi\! \left( A_2 \right)\! \mathcal{Y} _{k-1|k}\!\oplus \!\Psi\! \left( B_1 \right)\! \mathcal{W} _{k}^{x}, \label{Eq.TightIntervalM} \\
		\mathcal{X} _{k|k}&=\left[ \begin{array}{c}
			\max \left\{ \underline{M}_{k|k}^{x},\underline{M}_{k|k-1}^{x} \right\}\\
			\min \left\{ \overline{M}_{k|k}^{x},\overline{M}_{k|k-1}^{x} \right\}\\
		\end{array} \right], \\
		\!\!\!\!\!\!\!\!\mathcal{Y} _{k|k}\!=\!\Psi\! & \left( A_3 \right) \!\mathcal{X} _{k-1|k} \!\oplus\! \Psi\! \left( A_4 \right) \mathcal{Y} _{k-1|k} \!\oplus\! \Psi \! \left( B_2 \right) \mathcal{W}_{k}^{y}  \!,\! \label{Eq.TightInterval5} 
	\end{align}
	with 
	\begin{align}
		\Psi \left( \star \right) =\left[ \begin{matrix}
			\frac{\star+\left| \star \right|}{2}&		\frac{\star-\left| \star \right|}{2}\\
			\frac{\star-\left| \star \right|}{2}&		\frac{\star+\left| \star \right|}{2}\\
		\end{matrix} \right]. \nonumber
	\end{align}
	Also, the prior inference set of $Y_k$ is
	\begin{align} \label{Eq.TightInterval6}
		\!\!\!\!\!\mathcal{Y} _{k|k\!-\!1}\!=\!\Psi\! \left( \!A_3 \!\right)\! \mathcal{X}_{k-1|k-1} \!\oplus \!\Psi\! \left(\! A_4 \!\right) \!\mathcal{Y}_{k-1|k-1} \!\oplus \!\Psi\! \left( \!B_2 \!\right) \!\mathcal{W}_{k}^y  \!,\!\!
	\end{align}
	if $k\geqslant1$. If $k=0$, then $\mathcal{Y}_{0|0}=\mathcal{Y}_{0|-1}$ and
	\begin{align}\label{Eq.TightIntervalX0}
		\mathcal{X} _{0|0}=\left[ \begin{array}{c}
			\max \left\{ \underline{M}_{0|0}^{x},\underline{X}_{0|-1} \right\}\\
			\min \left\{ \overline{M}_{0|0}^{x},\overline{X}_{0|-1} \right\}\\
		\end{array} \right].
	\end{align}
\end{lem}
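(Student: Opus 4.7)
The plan is to reduce each set operation appearing in the exact recursions \eqref{Eq.exactYEst0}-\eqref{Eq.exactYEst6} to a closed-form expression on the stacked lower/upper-bound representation of intervals, and then read off equations \eqref{Eq.TightInterval1}-\eqref{Eq.TightIntervalX0} line by line. Three primitive rules need to be established first: the matrix-times-interval rule $A\mathcal{X}=\Psi(A)\mathcal{X}$, the Minkowski-sum rule, and the interval-intersection rule. The remainder of the argument is then just bookkeeping.

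First, I would prove the matrix-multiplication identity. Decompose $A = A^{+}+A^{-}$ with $A^{+}=(A+|A|)/2\ge 0$ and $A^{-}=(A-|A|)/2\le 0$. For any $X\in[\underline{X},\overline{X}]$, componentwise monotonicity gives $A^{+}\underline{X}\le A^{+}X\le A^{+}\overline{X}$ and $A^{-}\overline{X}\le A^{-}X\le A^{-}\underline{X}$, with each bound attained at an appropriate vertex of the box. Adding the two yields the tightest interval enclosure $[A^{+}\underline{X}+A^{-}\overline{X},\,A^{+}\overline{X}+A^{-}\underline{X}]$, which is exactly $\Psi(A)[\underline{X}^{\top},\overline{X}^{\top}]^{\top}$. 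The Minkowski-sum rule is immediate from componentwise additivity of suprema and infima, and the intersection rule for two axis-aligned boxes follows from the elementary fact that such an intersection is again a box with bounds given by componentwise $\max$ of lower bounds and $\min$ of upper bounds.

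Next, I would compose these primitives. Substituting them into \eqref{Eq.exactYEst0}-\eqref{Eq.exactYEst1} yields \eqref{Eq.TightInterval1}-\eqref{Eq.TightInterval2}; the intersections \eqref{Eq.exactYEst2}-\eqref{Eq.exactYEst3} give \eqref{Eq.TightInterval3}-\eqref{Eq.TightInterval4}; the forward propagations \eqref{Eq.exactYEst4} and \eqref{Eq.exactYEst6} give \eqref{Eq.TightInterval5} and \eqref{Eq.TightIntervalM}; the public-state refinement \eqref{Eq.exactYEst5} gives the stated expression for $\mathcal{X}_{k|k}$; and applying the same three rules to \eqref{Eq.calY1} yields \eqref{Eq.TightInterval6}. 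The base case $k=0$ must be handled separately: since no backward calibration is possible, $\mathcal{Y}_{0|0}=\mathcal{Y}_{0|-1}$, while \eqref{Eq.TightIntervalX0} is the single intersection of the observation box with the prior $\mathcal{X}_{0|-1}$.

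The main subtlety is that a box $\mathcal{X}$ transformed by a general matrix $A$ is a parallelotope rather than a box, so $\Psi(A)\mathcal{X}$ is the tightest interval enclosure of the exact image $A\mathcal{X}$ rather than the image itself. Consequently, the interval recursion produces outer approximations at every step, and errors may compound across iterations. The proof must therefore verify that correctness, in the sense of Proposition~\ref{Pro.InfPro}, is still inherited by the surrogates: if the interval at time $k-1$ contains the true state, so does the one at time $k$. This is immediate because each of the three primitives preserves superset containment, so the inductive argument of Proposition~\ref{Pro.InfPro} carries over verbatim to the interval recursions.
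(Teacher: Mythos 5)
Your proposal is correct and follows essentially the same route as the paper: reduce each exact set operation in \eqref{Eq.exactYEst0}--\eqref{Eq.exactYEst6} to the tightest interval enclosure and read off the stated formulas line by line, treating $k=0$ separately. The only difference is that the paper simply cites the enclosure identity $\mathcal{S}=\Psi(A)\mathcal{M}\oplus\Psi(B)\mathcal{W}$ as a known result from the interval-analysis literature (Lemma~\ref{Lm.basicInterval}), whereas you derive it from the decomposition $A=A^{+}+A^{-}$; your closing remark that the recursion yields outer approximations whose containment of the true state is preserved is a point the paper leaves implicit in Proposition~\ref{Pro.InfPro}.
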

\begin{proof}
	See Appendix \ref{App.Lm.tightestInferY}.
\end{proof}
\textcolor{blue}{Given the interval-based inference approach described in Lemma~\ref{Lm.tightestInferY}, the computational complexity of the inference attack can be characterized as follows.
\begin{prop}
	The computational complexity of the inference attack via interval analysis is $\mathcal{O}(n^3)$.
\end{prop}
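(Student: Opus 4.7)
The plan is to dissect the inference procedure in Lemma \ref{Lm.tightestInferY} into its elementary operations and verify that, unlike the CCG case, the interval representation has a fixed size of $2n$ throughout the recursion, so no exponential growth in $k$ can occur. This fixed representation size is the structural fact that makes an $\mathcal{O}(n^3)$ bound possible, and it is what I would emphasize first in order to contrast with the preceding CCG complexity result.

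First I would enumerate the building blocks of \eqref{Eq.TightInterval1}--\eqref{Eq.TightInterval6}. Constructing the block matrix $\Psi(A)$ from an $n \times n$ matrix $A$ reduces to copying entries of $A$ and $|A|$ into a $2n \times 2n$ arrangement, at cost $\mathcal{O}(n^2)$. Each interval in the lemma is stored as a $2n$-dimensional vector stacking its lower and upper bounds, so each multiplication $\Psi(A)\mathcal{Z}$ is a $2n \times 2n$ matrix acting on a $2n$ vector and costs $\mathcal{O}(n^2)$. The Minkowski sum $\oplus$ and the intersection via componentwise $\max/\min$ are elementwise operations on $2n$-vectors and therefore cost $\mathcal{O}(n)$.

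Second, I would account for the algebraic preprocessing hidden inside the $\Psi(\cdot)$ arguments. Inverting $A_1$ and $A_2$ by Gaussian elimination and forming the dense products $A_1^{-1}A_2$, $A_1^{-1}B_1$, $A_2^{-1}A_1$, and $A_2^{-1}B_1$ each cost $\mathcal{O}(n^3)$, and only a constant number of such matrix--matrix operations enter \eqref{Eq.TightInterval1}--\eqref{Eq.TightInterval2}. Adding this $\mathcal{O}(n^3)$ preprocessing to the constant number of $\mathcal{O}(n^2)$ matrix--vector multiplications, $\mathcal{O}(n)$ Minkowski sums, and $\mathcal{O}(n)$ intersections yields the overall $\mathcal{O}(n^3)$ bound.

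The only subtlety --- hardly an obstacle --- is establishing that the recursion in Lemma \ref{Lm.tightestInferY} never increases the representational size of the intervals, so that neither a generator dimension nor a constraint count grows with $k$ in the way it does for CCGs. Because the Minkowski sum, linear image, and intersection defined for intervals in Section \ref{Sec.InfInt} each return an interval described by the same $2n$ bounds, the interval class is closed under every step of the inference attack. This closure is exactly what keeps the per-step cost uniform in $k$, so the $\mathcal{O}(n^3)$ bound on the inference attack follows without any accumulation across time.
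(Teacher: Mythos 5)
Your proposal is correct and reaches the stated bound by the same basic route as the paper, namely counting the elementary operations in the recursion of Lemma \ref{Lm.tightestInferY}, but your accounting is noticeably more careful about where the cubic term actually comes from. The paper's proof attributes the $\mathcal{O}(n^3)$ cost to ``matrix multiplication'' with the $2n\times 2n$ matrix $\Psi(\star)$; as you correctly observe, the products $\Psi(A)\mathcal{Z}$ appearing in \eqref{Eq.TightInterval1}--\eqref{Eq.TightInterval6} are matrix--vector products on $2n$-dimensional interval representations and cost only $\mathcal{O}(n^2)$, while the genuine $\mathcal{O}(n^3)$ contribution sits in the constant number of inversions and matrix--matrix products ($A_1^{-1}$, $A_2^{-1}$, $A_1^{-1}A_2$, $A_2^{-1}B_1$, etc.) that form the arguments of $\Psi(\cdot)$. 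Since those matrices are time-invariant, this even suggests the per-step cost after a one-time $\mathcal{O}(n^3)$ preprocessing is $\mathcal{O}(n^2)$, which is consistent with (indeed stronger than) the claimed upper bound. You also make explicit the closure of the interval class under linear image, Minkowski sum, and intersection, which is the structural reason the representation size stays at $2n$ and the cost does not accumulate over $k$; the paper leaves this implicit and relies on the contrast with the preceding CCG proposition. No gaps.
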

\begin{proof}
	The dominant operation in the inference attack involves matrix multiplication.  
	Since the matrix $\Psi(\star)$ has dimensions $(2n \times 2n)$, the corresponding computational complexity is $\mathcal{O}(n^3)$.
\end{proof}
Although the matrix multiplication with large $n$ can still be computationally demanding, the complexity of interval-based inference is substantially lower and remains constant over time, in sharp contrast to the exponentially growing complexity of CCG-based inference.}
\subsection{Properties of the Interval Inference Attack}
\textcolor{blue}{The inference attack exhibits several key properties. In particular, the radius of the uncertainty set $\mathcal{Y}_{k|k}$, $p_{k|k}^y = \overline{Y}_{k|k} - \underline{Y}_{k|k}$, is bounded by a function of the radius of the disturbance and the observation set, as formalized below.}
\begin{lem}\label{Lm.radiusY}
	For any $k \geq 1$, the radius of $\mathcal{Y}_{k|k}$ satisfies
	\begin{align}\label{Eq.radiusYInequality}
		p_{k|k}^{y} &\leq \Big( |A_3| + |A_4| |A_2^{-1}| + |A_4| |A_2^{-1} A_1| \Big) \overline{p}^x \nonumber \\
		&\quad + |A_4| |A_2^{-1} B_1| p_k^{w,x} + |B_2| p_k^{w,y},
	\end{align}
	where $\overline{p}^x \geq p^{m,x}_{j|j}$ for any $j \geq 0$, $p^{m,x}_{k|k}$, $p^{w,x}_k$, and $p^{w,y}_k$ are the radii of $\mathcal{M}_{k|k}^x$, $\mathcal{W}_k^x$, and $\mathcal{W}_k^y$, respectively, and $|A|$ denotes the matrix with elementwise absolute values, i.e., $|A| = [|a_{i,j}|]$.
\end{lem}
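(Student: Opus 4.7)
The plan is to derive the bound by direct interval arithmetic applied to the characterization \eqref{Eq.TightInterval5} of $\mathcal{Y}_{k|k}$, combined with a primitive ``width propagation'' rule for the operator $\Psi(\cdot)$. Specifically, I would first establish that for any matrix $A$ and any interval $\mathcal{X}$ with lower/upper bounds $\underline{X},\overline{X}$, the interval $\Psi(A)\mathcal{X}$ has width equal to $|A|(\overline{X}-\underline{X})$. This follows directly from the block structure of $\Psi(A)$: the upper bound of $\Psi(A)\mathcal{X}$ is $\tfrac{A-|A|}{2}\underline{X}+\tfrac{A+|A|}{2}\overline{X}$ and the lower bound is the symmetric expression $\tfrac{A+|A|}{2}\underline{X}+\tfrac{A-|A|}{2}\overline{X}$, so subtracting gives $|A|(\overline{X}-\underline{X})$. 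Combined with the elementary fact that widths add under Minkowski sum, this converts each of the set-valued identities \eqref{Eq.TightInterval2} and \eqref{Eq.TightInterval5} into a linear identity on radii.

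Next I would exploit the intersection steps \eqref{Eq.TightInterval3}, \eqref{Eq.TightInterval4}: since intersection yields a subset of each operand, the resulting radius is dominated componentwise by either operand's radius. In particular, $p_{k-1|k}^{x}\le p_{k-1|k-1}^{x}$ and $p_{k-1|k}^{y}\le p_{k-1|k}^{m,y}$. The former, together with $\mathcal{X}_{j|j}\subseteq \mathcal{M}_{j|j}^{x}$ from the calibration step \eqref{Eq.exactYEst5}, allows me to bound $p_{k-1|k-1}^{x}\le p_{k-1|k-1}^{m,x}\le\overline{p}^{x}$; similarly $p_{k|k}^{m,x}\le\overline{p}^{x}$ by the hypothesis on $\overline{p}^{x}$.

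With these ingredients, I would apply the width propagation rule to \eqref{Eq.TightInterval5} to obtain
\begin{equation*}
p_{k|k}^{y}=|A_3|\,p_{k-1|k}^{x}+|A_4|\,p_{k-1|k}^{y}+|B_2|\,p_{k}^{w,y},
\end{equation*}
and to \eqref{Eq.TightInterval2} to obtain
\begin{equation*}
p_{k-1|k}^{m,y}=|A_2^{-1}|\,p_{k|k}^{m,x}+|A_2^{-1}A_1|\,p_{k-1|k-1}^{x}+|A_2^{-1}B_1|\,p_{k}^{w,x}.
\end{equation*}
Substituting the intersection inequality $p_{k-1|k}^{y}\le p_{k-1|k}^{m,y}$, inserting the uniform bound $\overline{p}^{x}$ wherever a public-state or observation radius appears, and collecting coefficients of $\overline{p}^{x}$, $p_{k}^{w,x}$, and $p_{k}^{w,y}$ yields exactly \eqref{Eq.radiusYInequality}.

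The main obstacle I expect is bookkeeping rather than conceptual: each Minkowski sum contributes a term whose coefficient is the elementwise absolute value of some matrix product, and care is needed to match those products with the factors $|A_2^{-1}|$, $|A_2^{-1}A_1|$, $|A_2^{-1}B_1|$ that appear in the target inequality. Monotonicity of $|\cdot|$ under subsets of intervals must be applied componentwise, which is why the bound is stated with the elementwise absolute value rather than with an induced matrix norm. The base case $k=1$ uses \eqref{Eq.TightIntervalX0}, but since $p_{0|0}^{x}\le p_{0|0}^{m,x}\le\overline{p}^{x}$ by exactly the same argument, the uniform bound absorbs it without additional work.
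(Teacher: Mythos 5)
Your proposal is correct and follows essentially the same route as the paper's proof: obtain the radius identities for $\mathcal{M}^y_{k-1|k}$ and $\mathcal{Y}_{k|k}$ via the rule $p^s = |A|p^m + |B|p^w$, use the intersection steps to get $p^y_{k-1|k}\le p^{m,y}_{k-1|k}$ and $p^x_{k-1|k}\le p^{m,x}_{k-1|k-1}\le \overline{p}^x$, and substitute. The only cosmetic difference is that you re-derive the width-propagation rule for $\Psi(A)$ from its block structure, whereas the paper imports it as a cited lemma (Lemma \ref{Lm.basicInterval}).
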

\begin{proof}
	See Appendix \ref{App.Lm.radiusY}.
\end{proof}

\textcolor{blue}{Since the volume of $\mathcal{Y}_{k|k}$ is given by the product of $2 p_{k|k}^y$ over all dimensions, $\mathrm{Vol}(\mathcal{Y}_{k|k})$ is similarly bounded by a function of the radius of the observation set $\mathcal{M}_{k|k}^x$. Consequently, a small $\mathcal{M}_{k|k}^x$ implies low privacy, and the adversary retains limited uncertainty after performing the inference attack.}

\textcolor{blue}{Furthermore, by comparing the predicted and posterior uncertainty sets, as in \eqref{Eq.calY1} and \eqref{Eq.exactYEst4}, the reduction of uncertainty can be quantified using the surrogate measure $\overline{\mathrm{Vol}}(\Delta \mathcal{Y}_{k|k})$, where
\begin{align}
	\Delta \mathcal{Y}_{k|k} = \mathcal{Y}_{k|k-1} \setminus \mathcal{Y}_{k|k}.
\end{align}
Since $\mathcal{Y}_{k|k} \subseteq \mathcal{Y}_{k|k-1}$, the surrogate volume can be computed as
\begin{align}\label{Eq.diffVolume}
	\overline{\mathrm{Vol}}(\Delta \mathcal{Y}_{k|k}) 
	= \overline{\mathrm{Vol}}(\mathcal{Y}_{k|k-1}) - \overline{\mathrm{Vol}}(\mathcal{Y}_{k|k}).
\end{align}
Thus, increasing the privacy level $\overline{\mathrm{Vol}}(\mathcal{Y}_{k|k})$ is equivalent to reducing the amount of uncertainty reduction $\overline{\mathrm{Vol}}(\Delta \mathcal{Y}_{k|k})$, since the prior uncertainty $\overline{\mathrm{Vol}}(\mathcal{Y}_{k|k-1})$ is fixed at time $k$. As shown in the next theorem, the amount of uncertainty reduction, is bounded by the new information extracted from $\mathcal{M}^x_{k|k}$.}
\begin{thm}\label{Th.PriLeakBounds}
The amount of uncertainty reduction at $k$ is
\begin{align}\label{Eq.PrivacyLeaksMultiDimen}
	\!\!\!\!	\overline{\mathrm{Vol}}\left( \Delta \mathcal{Y} _{k|k} \right)\! =\!\left\| \Psi\! \left( A_3 \right) \!\Delta \mathcal{X} _{k-1|k} \oplus \Psi\! \left( A_4 \right) \!\Delta \mathcal{Y} _{k-1|k} \right\|_1 \!, \!\!
\end{align}
with bounds
\begin{align}
	\overline{\mathrm{Vol}}&\left( \Delta \mathcal{Y} _{k|k} \right) \geqslant 2\left\|c_{k|k}^{y}-c_{k|k-1}^{y} \right\|_1, \nonumber \\
	\!\overline{\mathrm{Vol}}\!\left( \Delta \mathcal{Y} _{k|k} \right)\! \!\leqslant & \!\left\|\! A_3 \!\right\|_1\!\overline{\mathrm{Vol}}\!\left(\! \Delta \mathcal{X} _{k-1|k} \right)\! +\!\left\| \!A_4\! \right\|_1\!\overline{\mathrm{Vol}}\!\left(\! \Delta \mathcal{Y} _{k-1|k} \right) \!, \nonumber
\end{align}
where
$\Delta\mathcal{X}_{k-1|k} \!\!=\!\!\mathcal{X} _{k-1|k-1}\!\setminus\!\mathcal{X} _{k-1|k}$, $\Delta\! \mathcal{Y}_{k-1|k} \!\!=\!\!\mathcal{Y} _{k-1|k-1}\!\setminus\!\mathcal{Y} _{k-1|k}$, and $\|c_{k|k}^y - c_{k|k-1}^y\|_1$ quantifies the change in the central estimate due to the observation $\mathcal{X}_{k|k}$.
\end{thm}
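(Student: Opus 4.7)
The plan is to reduce \eqref{Eq.PrivacyLeaksMultiDimen} to an algebraic identity on intervals by subtracting the predicted and posterior expressions for the private-state set, and then to extract the two bounds via routine norm inequalities that exploit the inclusions $\mathcal{Y}_{k|k}\subseteq\mathcal{Y}_{k|k-1}$, $\mathcal{X}_{k-1|k}\subseteq\mathcal{X}_{k-1|k-1}$, and $\mathcal{Y}_{k-1|k}\subseteq\mathcal{Y}_{k-1|k-1}$ established in Proposition~\ref{Pro.InfPro}.

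First I would record that the interval operations $\oplus$, $\setminus$, and multiplication by a $\Psi(\cdot)$ matrix are defined entrywise on the stacked lower/upper-bound vectors, so a direct check gives the distributivity identities $\Psi(A)(\mathcal{Z}_1\setminus\mathcal{Z}_2)=\Psi(A)\mathcal{Z}_1\setminus\Psi(A)\mathcal{Z}_2$ and $(\mathcal{Z}_1\oplus\mathcal{Z}_3)\setminus(\mathcal{Z}_2\oplus\mathcal{Z}_4)=(\mathcal{Z}_1\setminus\mathcal{Z}_2)\oplus(\mathcal{Z}_3\setminus\mathcal{Z}_4)$. Applying these to \eqref{Eq.TightInterval5} and \eqref{Eq.TightInterval6}, the common disturbance term $\Psi(B_2)\mathcal{W}_k^y$ cancels and one obtains $\Delta\mathcal{Y}_{k|k}=\Psi(A_3)\Delta\mathcal{X}_{k-1|k}\oplus\Psi(A_4)\Delta\mathcal{Y}_{k-1|k}$. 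By the set inclusions above, the stacked representation of each $\Delta$ has a non-positive lower block and a non-negative upper block; splitting the absolute values in $\|\Delta\mathcal{Y}_{k|k}\|_1$ according to these signs collapses them to $\overline{\mathrm{Vol}}(\mathcal{Y}_{k|k-1})-\overline{\mathrm{Vol}}(\mathcal{Y}_{k|k})=\overline{\mathrm{Vol}}(\Delta\mathcal{Y}_{k|k})$, which establishes \eqref{Eq.PrivacyLeaksMultiDimen}.

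For the upper bound I would apply the triangle inequality of $\|\cdot\|_1$ to the Minkowski sum and then bound each summand. Writing $\Psi(A)$ in block form using the elementwise positive and negative parts $A^+=(A+|A|)/2$ and $A^-=(A-|A|)/2$, and invoking the sign pattern of $\Delta\mathcal{Z}$ identified above, the absolute values collapse to yield $\|\Psi(A)\Delta\mathcal{Z}\|_1=\mathbf{1}^\top|A|(\overline{U}-\underline{L})$, where $\overline{U}-\underline{L}$ is the elementwise width vector of $\Delta\mathcal{Z}$. The elementary inequality $\sum_j c_j v_j\leq(\sum_j c_j)(\sum_j v_j)$ for nonnegative $c_j,v_j$ then gives $\|\Psi(A)\Delta\mathcal{Z}\|_1\leq\|A\|_1\,\overline{\mathrm{Vol}}(\Delta\mathcal{Z})$, and summing the $A_3$ and $A_4$ contributions delivers the claimed upper bound.

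The lower bound follows by comparing midpoints. Writing $c_{k|k}^y-c_{k|k-1}^y=\frac{1}{2}(\overline{Y}_{k|k}-\overline{Y}_{k|k-1})+\frac{1}{2}(\underline{Y}_{k|k}-\underline{Y}_{k|k-1})$, applying the componentwise triangle inequality, and using the opposite signs of the two bound differences induced by $\mathcal{Y}_{k|k}\subseteq\mathcal{Y}_{k|k-1}$, each coordinate contributes $2|c_{k|k}^y(i)-c_{k|k-1}^y(i)|\leq(\overline{Y}_{k|k-1}(i)-\underline{Y}_{k|k-1}(i))-(\overline{Y}_{k|k}(i)-\underline{Y}_{k|k}(i))$; summation over $i$ yields $2\|c_{k|k}^y-c_{k|k-1}^y\|_1\leq\overline{\mathrm{Vol}}(\Delta\mathcal{Y}_{k|k})$. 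The hard part will be the bookkeeping required to track how $\Psi(\cdot)$ acts on signed interval differences; once this sign structure is pinned down, the remaining inequalities are direct.
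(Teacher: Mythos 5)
Your proposal is correct and follows essentially the same route as the paper's proof: the same linearity/cancellation argument for the identity \eqref{Eq.PrivacyLeaksMultiDimen}, the same nonnegativity-of-radii argument with the elementary inequality $\||A|p\|_1\leq\|A\|_1\|p\|_1$ for the upper bound, and the same midpoint/triangle-inequality comparison for the lower bound. The only difference is that you make explicit the sign bookkeeping for $\Psi(\cdot)$ acting on the difference intervals, which the paper obtains by invoking Lemma~\ref{Lm.basicInterval} and otherwise leaves implicit.
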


\begin{proof}
See Appendix \ref{App.Th.PriLeakBounds}.
\end{proof}

Combining \eqref{Eq.diffVolume} and Theorem \ref{Th.PriLeakBounds}, the privacy level $\overline{\mathrm{Vol}}(\mathcal{Y}_{k|k})$ can be bounded as follows.

\begin{lem}\label{Lm.PrivBound}
	The privacy level satisfies
	\begin{align}
		&\overline{\mathrm{Vol}}\!\left( \!\mathcal{Y} _{k|k-1}\! \right) \!-\!\left\| \!A_3\! \right\|_1 \!\overline{\mathrm{Vol}}\!\left( \!\Delta \mathcal{X} _{k-1|k} \right) \!-\!\left\| \!A_4\! \right\|_1 \!\overline{\mathrm{Vol}}\!\left( \!\Delta \mathcal{Y} _{k-1|k} \right)  \nonumber
		\\
		&\leqslant  \overline{\mathrm{Vol}}\!\left( \!\mathcal{Y} _{k|k}\! \right) \!\leqslant \overline{\mathrm{Vol}}\left( \mathcal{Y} _{k|k-1} \right) -2\left\| c_{k|k}^{y}-c_{k|k-1}^{y} \right\|_1 . \nonumber
	\end{align}
\end{lem}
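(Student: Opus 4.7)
The plan is to derive the bounds on $\overline{\mathrm{Vol}}(\mathcal{Y}_{k|k})$ by a direct substitution argument, combining the definitional identity in \eqref{Eq.diffVolume} with the two one-sided inequalities already established in Theorem \ref{Th.PriLeakBounds}. Since both ingredients have been proved, the lemma is effectively a rearrangement, and the main task is to choose signs carefully so the inequalities flip in the intended direction.

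First, I would recall from \eqref{Eq.diffVolume} that, because $\mathcal{Y}_{k|k}\subseteq\mathcal{Y}_{k|k-1}$ by Proposition \ref{Pro.InfPro}, the surrogate volumes satisfy
\begin{align}
\overline{\mathrm{Vol}}(\mathcal{Y}_{k|k}) = \overline{\mathrm{Vol}}(\mathcal{Y}_{k|k-1}) - \overline{\mathrm{Vol}}(\Delta\mathcal{Y}_{k|k}).
\end{align}
Thus, any bound on $\overline{\mathrm{Vol}}(\Delta\mathcal{Y}_{k|k})$ translates, via a sign flip, into an opposite-direction bound on $\overline{\mathrm{Vol}}(\mathcal{Y}_{k|k})$, since $\overline{\mathrm{Vol}}(\mathcal{Y}_{k|k-1})$ is a fixed nonnegative quantity at time $k$.

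Next, I would apply the two inequalities from Theorem \ref{Th.PriLeakBounds}. Substituting the upper bound
\[
\overline{\mathrm{Vol}}(\Delta\mathcal{Y}_{k|k}) \leq \|A_3\|_1\,\overline{\mathrm{Vol}}(\Delta\mathcal{X}_{k-1|k}) + \|A_4\|_1\,\overline{\mathrm{Vol}}(\Delta\mathcal{Y}_{k-1|k})
\]
into the identity above yields the left-hand (lower) bound on $\overline{\mathrm{Vol}}(\mathcal{Y}_{k|k})$. Substituting the lower bound
\[
\overline{\mathrm{Vol}}(\Delta\mathcal{Y}_{k|k}) \geq 2\,\|c_{k|k}^{y}-c_{k|k-1}^{y}\|_1
\]
yields the right-hand (upper) bound. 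Combining the two gives exactly the chain stated in the lemma.

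There is no real obstacle beyond bookkeeping: the key inequalities have already been proved in Theorem \ref{Th.PriLeakBounds}, and the identity \eqref{Eq.diffVolume} relies only on the inclusion $\mathcal{Y}_{k|k}\subseteq\mathcal{Y}_{k|k-1}$ from Proposition \ref{Pro.InfPro}. The only point worth double-checking is that $\overline{\mathrm{Vol}}$, defined as the sum of coordinate-wise interval lengths, is additive under the set-theoretic difference used in \eqref{Eq.diffVolume}; this holds because the intervals are axis-aligned and the interval difference $\mathcal{Y}_{k|k-1}\setminus\mathcal{Y}_{k|k}$ is defined coordinate-wise as $\overline{Y}_{k|k-1}-\overline{Y}_{k|k}$ and $\underline{Y}_{k|k-1}-\underline{Y}_{k|k}$, so $\overline{\mathrm{Vol}}$ behaves linearly along each axis.
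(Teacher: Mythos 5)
Your proposal is correct and is exactly the paper's argument: the paper presents this lemma as an immediate consequence of combining the identity \eqref{Eq.diffVolume} with the two one-sided bounds of Theorem \ref{Th.PriLeakBounds}, with the sign flips handled precisely as you describe. No gaps.
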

\textcolor{blue}{As a result, we can reduce the extracted information $\overline{\mathrm{Vol}}\left( \Delta \mathcal{X} _{k-1|k} \right)$ and $\overline{\mathrm{Vol}}\left( \Delta \mathcal{Y} _{k-1|k} \right)$ to increase the privacy level $\overline{\mathrm{Vol}}\left( \mathcal{Y}_{k|k}\right)$ via designing proper observation set $\mathcal{M}^x_{k|k}$. Moreover, if the privacy level is high, the adversary’s ability to update its central estimate $c_{k|k}^y$ is also limited, as indicated by the small value of $\left\| c_{k|k}^y - c_{k|k-1}^y \right\|_1$. Based on this observation, $\mathcal{M}^x_{k|k}$ can also be designed to hinder accurate central estimate updates, further improving the privacy level.}
\section{Privacy Filter Design Problem Using the Volumetric Privacy measure}\label{Sec.PolicyDesign}
\textcolor{blue}{As discussed previously, an inappropriate choice of the observation set would cause privacy leakage of the private state through inference attacks. To mitigate this risk, we address the privacy filter design problem in this section. The proposed filter determines an appropriate observation set that achieves a desirable balance between preserving the data utility of the public state and ensuring the privacy protection of the private state.
\subsection{The Structure of Privacy Filter}
We begin by defining the decision domain of the privacy filter as follows. At time $k$, given the last decision set $\mathcal{X}_{k-1|k-1}$ and the private set $\mathcal{Y}_{k-1|k-1}$, the inference set of $X_k$ can be computed via,
\begin{align}
	\mathcal{Y} _{k|k-1}=A_3\mathcal{X} _{k-1|k-1}\oplus A_4\mathcal{Y} _{k-1|k-1}\oplus B_2\mathcal{W} _{k}^{y}, \nonumber
\end{align}
which contains all possible public states that can be reached from any states in $\mathcal{X}_{k-1|k-1}$ and $\mathcal{Y}_{k-1|k-1}$. Therefore, $\mathcal{X}_{k|k-1}$ is the maximum observation set $\mathcal{M}^x_{k|k}$ that the filter can release, i.e., $\mathcal{M}^x_{k|k}\subseteq \mathcal{X}_{k|k-1}$. To maintain high data utility, the uncertainty set $\mathcal{X}_{k|k}$ must satisfy the following constraint:
\begin{equation}
	\mathrm{Vol}\left(\mathcal{X}_{k|k}\right) \leq \epsilon^x, \nonumber
\end{equation}
where $\epsilon^x > 0$ specifies the desired upper bound on the uncertainty volume. Since $\mathcal{X}_{k|k}$ is a subset of the observation set $\mathcal{M}_{k|k}^x$, this constraint can be equivalently enforced on the larger set, $\mathrm{Vol}\left(\mathcal{M}_{k|k}^x\right) \leq \epsilon^x$, which simplifies the design of the observation set while ensuring that the utility requirement is satisfied.}

\textcolor{blue}{To reduce privacy leakage while preserving data utility, we design the privacy filter illustrated in Fig. \ref{Fig.PrivacyFilterStructure}. The design consists of two steps:  
(1) randomly generate a set $\mathcal{S}_{k|k}^x$ such that $\mathcal{S}_{k|k}^x \subseteq \mathcal{X}_{k|k-1}$ and $\mathrm{Vol}(\mathcal{S}_{k|k}^x) \leq \epsilon^x$;  
(2) optimize the observation set $\mathcal{M}^x_{k|k}$, which contains $\mathcal{S}_{k|k}^x$, to maximize the privacy level. Specifically, in the optimization step, given $\mathcal{S}_{k|k}^x$, we maximize the privacy level under the inference attack \eqref{Eq.exactYEst0}-\eqref{Eq.exactYEst6} by solving
\begin{align}\label{Eq.OptP1}
	\mathbf{P_1}:~ &\max_{\mathcal{M}^x_{k|k}}~ \mathrm{Vol}\!\left( \mathcal{Y}_{k|k} \right) \\
	\text{s.t.}~ &\left\{
	\begin{array}{l}
		\mathcal{S}^x_{k|k} \subseteq \mathcal{M}^x_{k|k}, \\[0.2em]
		\mathcal{M}^x_{k|k} \subseteq \mathcal{X}_{k|k-1}, \\[0.2em]
		\mathrm{Vol}\!\left( \mathcal{M}^x_{k|k} \right) \leq \epsilon^x, \\[0.2em]
		\eqref{Eq.exactYEst0} \text{-} \eqref{Eq.exactYEst6}.
	\end{array}
	\right. .
\end{align}
Note that $\mathcal{S}^x_{k|k}$ is randomly generated as a subset of $\mathcal{X}_{k|k-1}$ and can be made sufficiently small in practice. For instance, the random set $\mathcal{S}^x_{k|k}$ may contain only the true public state $x_k$. According to Lemma \ref{Lm.radiusY}, a sufficiently small observation set could lead to potential privacy leakage. Therefore, recovering $\mathcal{S}^x_{k|k}$ from $\mathbf{P_1}$ must be avoided. In the following, we demonstrate that an attacker cannot recover $\mathcal{S}^x_{k|k}$ by inverting the optimization problem $\mathbf{P_1}$, owing to the randomization mechanism embedded in the privacy filter.
\begin{prop}\label{Pro.invertLP}
	The attacker cannot obtain the smaller set $\mathcal{S}^x_{k|k}$ by inverting the optimization problem $\mathbf{P_1}$.
\end{prop}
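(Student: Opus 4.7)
The plan is to show that the map from a random seed $\mathcal{S}^x_{k|k}$ to the optimizer $\mathcal{M}^{x,*}_{k|k}$ of $\mathbf{P_1}$ is many-to-one, so that observing $\mathcal{M}^{x,*}_{k|k}$ does not allow the attacker to identify $\mathcal{S}^x_{k|k}$ uniquely. The key observation is that the seed enters $\mathbf{P_1}$ only through the single inclusion constraint $\mathcal{S}^x_{k|k}\subseteq\mathcal{M}^x_{k|k}$; it appears neither in the objective $\mathrm{Vol}(\mathcal{Y}_{k|k})$, nor in the utility bound $\mathrm{Vol}(\mathcal{M}^x_{k|k})\leq\epsilon^x$, the outer inclusion $\mathcal{M}^x_{k|k}\subseteq\mathcal{X}_{k|k-1}$, or the inference recursions \eqref{Eq.exactYEst0}--\eqref{Eq.exactYEst6}. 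Therefore the only information that the released set $\mathcal{M}^{x,*}_{k|k}$ could possibly reveal about the seed is the qualitative inclusion $\mathcal{S}^x_{k|k}\subseteq\mathcal{M}^{x,*}_{k|k}$, in a manner analogous to Proposition \ref{Pro.InfPro}.

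I would then exhibit an explicit family of distinct seeds that all produce the same optimizer. Let $\mathcal{S}^x_{k|k}$ be the seed drawn by the filter and let $\mathcal{M}^{x,*}_{k|k}$ be the corresponding optimum. For any $\mathcal{S}^{x,\prime}_{k|k}$ with $\{x_k\}\subseteq\mathcal{S}^{x,\prime}_{k|k}\subseteq\mathcal{S}^x_{k|k}$, the feasible region of $\mathbf{P_1}$ with seed $\mathcal{S}^{x,\prime}_{k|k}$ contains the feasible region with seed $\mathcal{S}^x_{k|k}$, so $\mathcal{M}^{x,*}_{k|k}$ remains feasible and attains the same objective value under the relaxed problem. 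Provided the utility constraint $\mathrm{Vol}(\mathcal{M}^x_{k|k})\leq\epsilon^x$ is active at the optimum, which is typical because larger admissible observation sets tend to inflate $\mathrm{Vol}(\mathcal{Y}_{k|k})$ through Lemma \ref{Lm.PrivBound}, no strictly larger feasible set can dominate $\mathcal{M}^{x,*}_{k|k}$, so it persists as an optimizer for every seed in this nested family. This already yields a continuum of candidate seeds that are indistinguishable from the attacker's viewpoint.

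The hard part will be the edge case in which relaxing the seed strictly enlarges the optimal value, so that some alternative seed $\mathcal{S}^{x,\prime}_{k|k}$ would have produced a different optimizer. To handle it, I would appeal to the randomization step that generates $\mathcal{S}^x_{k|k}$: since the filter samples $\mathcal{S}^x_{k|k}$ from a distribution supported on a nontrivial family of feasible seeds within $\mathcal{X}_{k|k-1}$, the attacker must invert a stochastic map rather than a deterministic one. The best the attacker can do is compute the preimage $\{\mathcal{S}^x_{k|k}:\mathcal{M}^{x,*}_{k|k}\text{ solves }\mathbf{P_1}\}$, which by the construction above contains the entire level set of nested subsets and therefore carries positive Lebesgue measure in the space of intervals. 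Point-identification of $\mathcal{S}^x_{k|k}$ is thus impossible, establishing the proposition.
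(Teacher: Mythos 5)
Your argument is sound in spirit but follows a genuinely different route from the paper. The paper's own proof is much shorter and argues from the geometry of the released set rather than from non-identifiability of the seed: because $\alpha_k$ and $\beta_k$ in \eqref{Eq.randIntervalPolicyLarge} may be zero, the true state $x_k$ may lie on the boundary of $\mathcal{S}^x_{k|k}$, and $\mathcal{S}^x_{k|k}$ may itself coincide with the optimizer $\mathcal{M}^{x,\star}_{k|k}$; hence the attacker cannot propose \emph{any} strictly smaller set that is guaranteed to still contain $x_k$, which is what "recovering $\mathcal{S}^x_{k|k}$" would require. Your proof instead establishes that the seed-to-optimizer map is many-to-one by exhibiting the nested family $\{x_k\}\subseteq\mathcal{S}^{x,\prime}_{k|k}\subseteq\mathcal{S}^x_{k|k}$ of indistinguishable seeds. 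What your approach buys is a cleaner information-theoretic statement (the preimage of the released set has positive measure, so point identification fails); what the paper's approach buys is directness --- it attacks exactly the threat described before the proposition, namely that the attacker might localize $x_k$ more tightly than $\mathcal{M}^{x,\star}_{k|k}$ does.

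One step of yours deserves tightening. When you shrink the seed to $\mathcal{S}^{x,\prime}_{k|k}$, the newly feasible observation sets are those containing $\mathcal{S}^{x,\prime}_{k|k}$ but not $\mathcal{S}^x_{k|k}$; these are not necessarily \emph{larger} intervals but can be repositioned ones, so the remark that "no strictly larger feasible set can dominate" when the utility constraint is active does not cover all newly admissible competitors, and the persistence of $\mathcal{M}^{x,\star}_{k|k}$ as an optimizer is not fully established. Your fallback to the stochasticity of the seed in the edge case is also slightly circular, since the positive measure of the preimage is exactly what was in question there. That said, the paper's own proof is itself informal ("may reside", "in some cases"), so your argument meets and arguably exceeds the standard of rigor applied to this proposition.
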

\begin{proof}
	First, $\mathcal{S}_{k|k}^x$ is selected as a random subset of $\mathcal{X}_{k|k-1}$ that contains the true state $x_k$. Consequently, $x_k$ may reside on the boundary of $\mathcal{S}_{k|k}^x$.  
	Next, let $\mathcal{M}_{k|k}^{x,\star}$ denote the optimal observation set. In some cases, $\mathcal{S}_{k|k}^x$ may coincide with $\mathcal{M}_{k|k}^{x,\star}$, in which case $x_k$ may also lie on the boundary of $\mathcal{M}_{k|k}^{x,\star}$.  
	Therefore, an attacker cannot reconstruct a strictly smaller feasible set containing $x_k$ by inverting the optimization process.
\end{proof}
As a result, the proposed privacy filter exhibits the following properties: (1) In the absence of inference attacks, the filter output $\mathcal{M}^x_{k|k}$ satisfies the utility constraint. (2) In the presence of the inference attack described in \eqref{Eq.exactYEst0}--\eqref{Eq.exactYEst6}, the filter output maximizes the privacy level. (3) The filter is robust against reverse attacks that attempt to recover the sufficiently small set $\mathcal{S}_{k|k}^x$, thereby reducing the risk of privacy compromise from adversaries exploiting structural vulnerabilities.}

\textcolor{blue}{Moreover, the computational complexity of both the inference attack in \eqref{Eq.exactYEst0}--\eqref{Eq.exactYEst6} and the volume computation increases with the complexity of the set representations. Consequently, a trade-off exists between the achievable privacy enhancement of the proposed filter and its computational cost. While more sophisticated set representations may improve the accuracy of privacy evaluation, for efficiency and clarity, we next present a concrete design based on the interval approximation described in Sec. \ref{Sec.InfInt}.}

\textcolor{blue}{In addition, as discussed in Sec.\ref{Sec.InfInt}, the surrogate measure $\overline{\mathrm{Vol}}(\cdot)$ bounds the volume of interval, and could simplify the computation complexity. We adopt the surrogate volumetric measure $\overline{\mathrm{Vol}}(\cdot)$ in the optimization to further reduce computational complexity, and we verify that using this surrogate measure still leads to improved privacy levels in Sec. \ref{Sec.NumericalVer}.}
\begin{figure}[!t]
	\centering
	\includegraphics[width=3.2in]{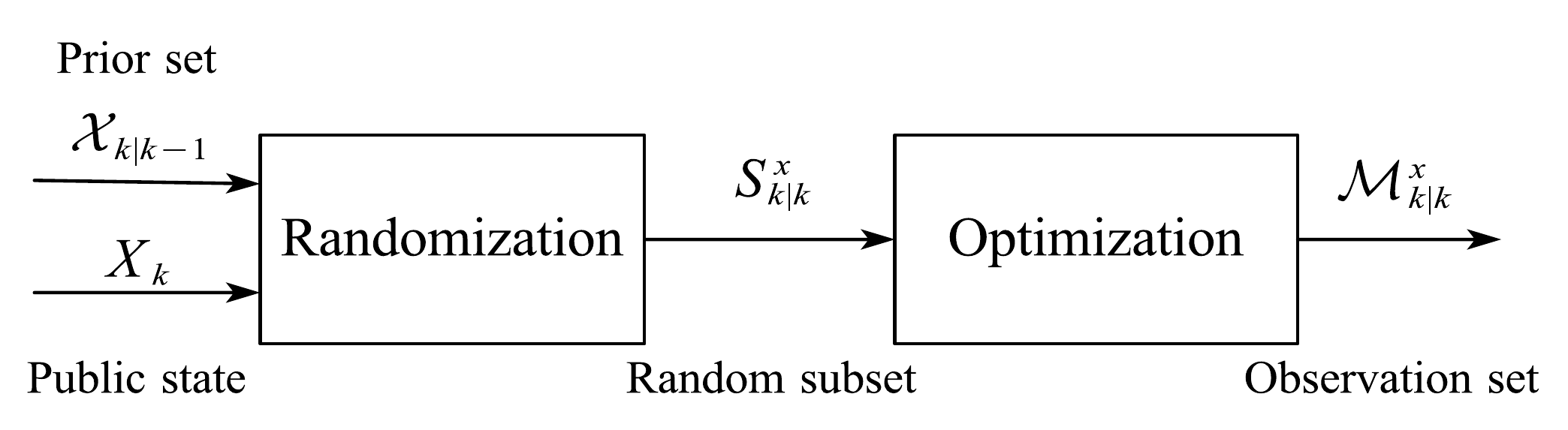}
	\caption{Structure of the proposed privacy filter.}
	\label{Fig.PrivacyFilterStructure}
\end{figure}
\subsection{Randomization}\label{Sec.randomPolicy}
We consider the following random set
\begin{align}\label{Eq.randIntervalPolicyLarge}
	{\mathcal{S}} _{k|k}^{x}=\left[ \begin{array}{c}
		x_k-{\alpha}_k \left( x_k-\underline{X}_{k|k-1} \right)\\
		x_k+{\beta}_k \left( \overline{X}_{k|k-1}-x_k \right)\\
	\end{array} \right] ,
\end{align}
where $\alpha_k$ and $\beta_k$ are uniform random variables with
\begin{align}
	{\alpha}_k\!\in\! \left[\!0, \frac{\epsilon^x}{2\left\|x_k-\underline{X}_{k|k-1}\right\|_1}\right]\!\!, 
	{\beta}_k \!\in\! \left[\!0, \frac{\epsilon^x}{2\left\|\overline{X}_{k|k-1}-x_k\right\|_1}\right]\!\!. \nonumber
\end{align}
Since $\left( x_k-\underline{X}_{k|k-1} \right)$ is the radius from the actual public state $X_k=x_k$ to the lower endpoint of $\mathcal{X}_{k|k-1}$, and $\left( \overline{X}_{k|k-1}-x_k \right)$ is the radius from $x_k$ to the upper endpoint of $\mathcal{X}_{k|k-1}$, the random set ${\mathcal{S}} _{k|k}^{x}$ becomes a subset of $\mathcal{X}_{k|k-1}$ that contains the actual public state. Also, we can shown $\mathcal{S}^x_{k|k}$ satisfies the utility constraint as follows,
\begin{align}
	\overline{\mathrm{Vol}}\left(\mathcal{S}^x_{k|k}\right) =& {\beta}_k\left\| \overline{X}_{k|k-1}-x_k  \right\|_1+{\alpha}_k\left\|  x_k-\underline{X}_{k|k-1} \right\|_1 \nonumber
	\\ \leqslant &  \frac{\epsilon^x}{2\left\|\overline{X}_{k|k-1}-x_k\right\|_1} \left\| \overline{X}_{k|k-1}-x_k  \right\|_1 \nonumber\\ &+ \frac{\epsilon^x}{2\left\|x_k-\underline{X}_{k|k-1}\right\|_1} \left\| x_k-\underline{X}_{k|k-1}  \right\|_1 \nonumber \\ = & \epsilon^x . \nonumber
\end{align}
We next restrict $\mathcal{S}^x_{k|k}$ be the subset of the observation set $\mathcal{M}^x_{k|k}$, and optimize $\mathcal{M}^x_{k|k}$ to improve the privacy level.
\subsection{Privacy Filter Optimization}
In this subsetion, we demonstrate that the optimization problem $\mathbf{P_1}$ based on the interval inference can be solved via linear programming.
\begin{thm}\label{Th.linearOpt}
	The privacy filter optimization problem $\mathbf{P_1}$ with the surrogate privacy measure $\overline{\mathrm{Vol}}\left(\mathcal{Y}_{k|k}\right)$ and utility measure $\overline{\mathrm{Vol}}\left(\mathcal{M}^x_{k|k}\right)$ is equivalent to the following linear programming 
	\begin{align}
		\mathbf{P_2}: &\max_{\epsilon ^y,\mathcal{M}^x_{k|k},p_{k-1|k}^{\Delta x} ,p_{k-1|k}^{\Delta y}} \epsilon ^y \nonumber \\
		&\left\{ \begin{array}{c}
			\left\| \left| A_3 \right|p_{k-1|k}^{\Delta x}+\left| A_4 \right|p_{k-1|k}^{\Delta y} \right\|_1\geqslant \epsilon ^y\\
			\left\| \overline{M}^x_{k|k}-\underline{M}^x_{k|k} \right\|_1\leqslant \epsilon ^x\\
			\underline{X}_{k|k-1}\leqslant \underline{M}^x_{k|k}\leqslant \underline{S}^x_{k|k} \\
			\overline{S}^x_{k|k}\leqslant \overline{M}^x_{k|k}\leqslant \overline{X}_{k|k-1}\\
			\eqref{Eq.TightInterval1}-\eqref{Eq.TightInterval2}
		\end{array} \right. , \nonumber \\
		&\left\{ \begin{array}{c}
			p_{k-1|k}^{\Delta z}\geqslant 0\\
			p_{k-1|k}^{\Delta z}\geqslant p_{k-1|k-1}^{z}-p_{k-1|k}^{m,z}\\
			2p_{k-1|k}^{\Delta z}\geqslant \overline{Z}_{k-1|k-1}-\overline{M}_{k-1|k}^{z}\\
			2p_{k-1|k}^{\Delta z}\geqslant \underline{M}_{k-1|k}^{z}-\underline{Z}_{k-1|k-1}\\
		\end{array} \right. , \label{Eq.optDeltaConstraint1} 
	\end{align}
	where $p_{k-1|k}^{\Delta x}\!\in \mathcal{R}^{n_x}$, $Z=X,Y$, $\epsilon^y\geqslant 0$ and $\mathcal{M}^x_{k|k}\subseteq \mathcal{R}^{2n_x}$.
\end{thm}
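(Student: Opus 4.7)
The plan is to show that $\mathbf{P_1}$ under the surrogate measures can be cast as $\mathbf{P_2}$ by rewriting the objective and every constraint as linear expressions in the endpoints of $\mathcal{M}^x_{k|k}$ together with the auxiliary variables $p^{\Delta x}_{k-1|k}, p^{\Delta y}_{k-1|k}, \epsilon^y$.

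First, I would invoke the identity $\overline{\mathrm{Vol}}(\mathcal{Y}_{k|k}) = \overline{\mathrm{Vol}}(\mathcal{Y}_{k|k-1}) - \overline{\mathrm{Vol}}(\Delta\mathcal{Y}_{k|k})$ from \eqref{Eq.diffVolume}, so that the $\mathcal{M}^x_{k|k}$-dependent part of the objective collapses to the uncertainty-reduction term. Combining Theorem \ref{Th.PriLeakBounds} with the interval identity that the width vector of $\Psi(A)\mathcal{Z}$ equals $2|A|p^z$ for any interval $\mathcal{Z}$ with half-width $p^z$, this term reduces to $\overline{\mathrm{Vol}}(\Delta\mathcal{Y}_{k|k}) = 2\||A_3|p^{\Delta x}_{k-1|k} + |A_4|p^{\Delta y}_{k-1|k}\|_1$, which is linear in the non-negative auxiliaries $p^{\Delta x}_{k-1|k}, p^{\Delta y}_{k-1|k}$ because $|A_3|$ and $|A_4|$ have non-negative entries.

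Second, I would show that the auxiliary half-widths $p^{\Delta z}_{k-1|k}$ (for $Z=X,Y$) admit an exact linear-inequality characterisation. Since $\mathcal{Z}_{k-1|k} = \mathcal{M}^z_{k-1|k} \cap \mathcal{Z}_{k-1|k-1}$ is formed through pointwise $\max/\min$ on the endpoints, a direct calculation gives $2p^{\Delta z}_{k-1|k} = \max(\overline{Z}_{k-1|k-1} - \overline{M}^z_{k-1|k},0) + \max(\underline{M}^z_{k-1|k} - \underline{Z}_{k-1|k-1},0)$. Performing a case split on the signs of the two bracketed quantities then reveals that in every case the true value coincides with the pointwise maximum of the four lower bounds $0$, $p^z_{k-1|k-1} - p^{m,z}_{k-1|k}$, $(\overline{Z}_{k-1|k-1} - \overline{M}^z_{k-1|k})/2$, and $(\underline{M}^z_{k-1|k} - \underline{Z}_{k-1|k-1})/2$, which are precisely the inequalities in \eqref{Eq.optDeltaConstraint1}. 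Since the objective is monotone in $p^{\Delta z}_{k-1|k}$, at optimum the LP places each slack on this envelope, so the linearisation is exact rather than a relaxation.

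Third, the remaining constraints of $\mathbf{P_1}$ translate directly. The inclusions $\mathcal{S}^x_{k|k} \subseteq \mathcal{M}^x_{k|k} \subseteq \mathcal{X}_{k|k-1}$ become the endpoint inequalities $\underline{X}_{k|k-1} \leq \underline{M}^x_{k|k} \leq \underline{S}^x_{k|k}$ and $\overline{S}^x_{k|k} \leq \overline{M}^x_{k|k} \leq \overline{X}_{k|k-1}$. The surrogate utility constraint $\overline{\mathrm{Vol}}(\mathcal{M}^x_{k|k}) \leq \epsilon^x$ reduces to $\|\overline{M}^x_{k|k}-\underline{M}^x_{k|k}\|_1 \leq \epsilon^x$, linear because the width vector is componentwise non-negative. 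The maps \eqref{Eq.TightInterval1}-\eqref{Eq.TightInterval2} producing $\mathcal{M}^z_{k-1|k}$ are affine in the endpoints of $\mathcal{M}^x_{k|k}$ because $\Psi(\cdot)$ delivers constant matrices. Introducing $\epsilon^y$ as the epigraph variable for the $\ell_1$ volume term and assembling these pieces yields $\mathbf{P_2}$. The principal technical obstacle is the case analysis in step two, which must establish that the four linear inequalities jointly give an exact description of each half-width $p^{\Delta z}_{k-1|k}$ and that the LP optimum necessarily attains this envelope; the remaining steps are essentially mechanical transcriptions of Lemma \ref{Lm.tightestInferY} and the surrogate volumetric definitions.
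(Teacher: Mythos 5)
Your proposal is correct and takes essentially the same route as the paper's proof: it converts the objective via the identity $\overline{\mathrm{Vol}}(\mathcal{Y}_{k|k})=\overline{\mathrm{Vol}}(\mathcal{Y}_{k|k-1})-\overline{\mathrm{Vol}}(\Delta\mathcal{Y}_{k|k})$, linearises each $p^{\Delta z}_{k-1|k}$ as the pointwise maximum of the four lower bounds in \eqref{Eq.optDeltaConstraint1} (made exact by monotonicity of the objective in these variables), and transcribes the inclusion and utility constraints into endpoint inequalities. Your explicit case split verifying $\max\{a,0\}+\max\{b,0\}=\max\{0,a,b,a+b\}$ is just a more detailed rendering of the step the paper records as \eqref{Eq.positiveDelatK_1}, and the factor-of-two discrepancy in your expression for $\overline{\mathrm{Vol}}(\Delta\mathcal{Y}_{k|k})$ is only the paper's own radius-versus-width convention and does not affect the equivalence.
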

\begin{proof}
	See Appendix \ref{App.Th.linearOpt}
\end{proof}
\begin{figure*}
	\centering
	\includegraphics[width=6.6in]{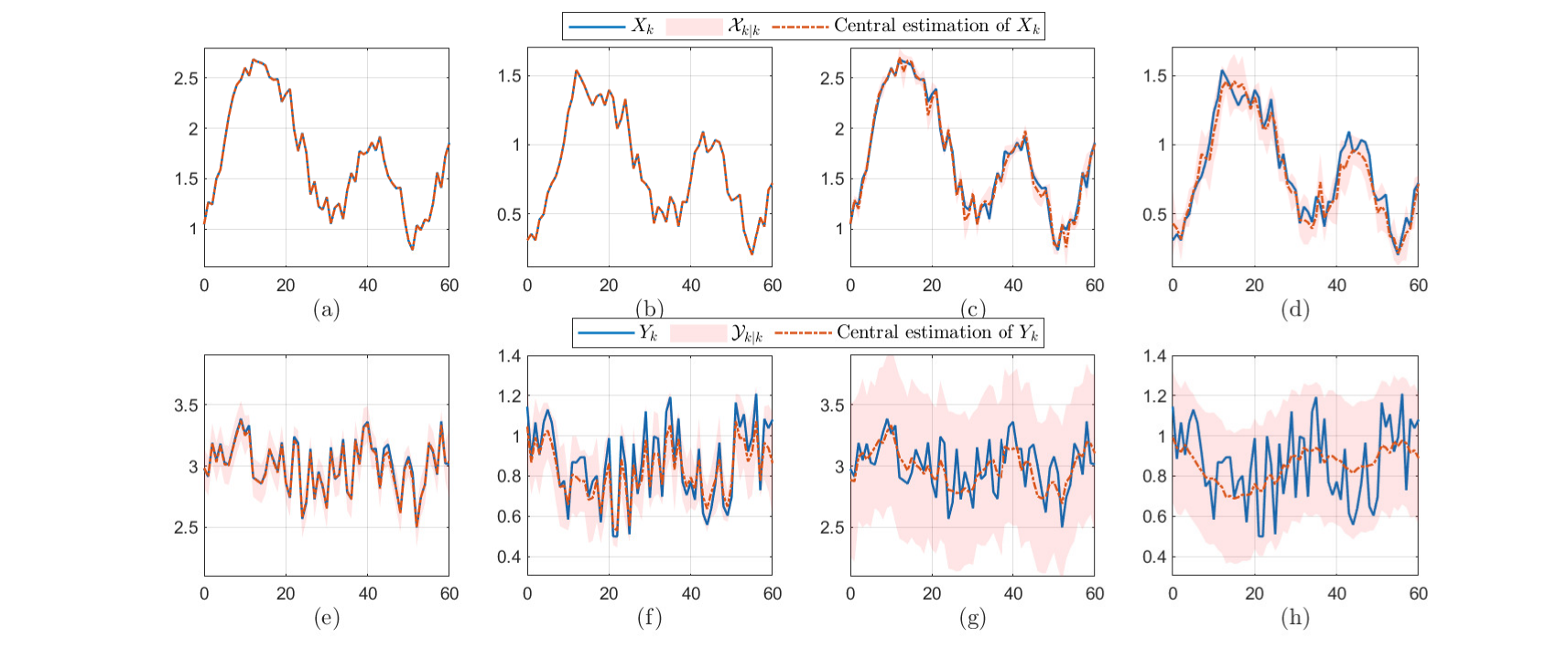}
	\caption{Inference attack results after applying the optimal privacy filter:  
		(a), (b) Estimated $X_k$ and (e), (f) Estimated $Y_k$ for $\mathrm{Vol}\!\left(\mathcal{M}_{k|k}^{x}\right) \leq 0.01$;  
		(c), (d) Estimated $X_k$ and (g), (h) Estimated $Y_k$ for $\mathrm{Vol}\!\left(\mathcal{M}_{k|k}^{x}\right) \leq 0.5$.}
	\label{Fig.PrivEstResXY}
\end{figure*}
Consequently, we can solve the linear programming problem $\mathbf{P_2}$ to obtain the optimal observation set that defends the system against the inference attack defined in Section \ref{Sec.InferAttack}. 
\section{Numerical Verification}\label{Sec.NumericalVer}
In this section, we study the performance of privacy  filter for the production-inventory problem with the following parameters
$$A_1=\left[ \begin{matrix}
		1.00&		0.00\\
		0.00&		1.00\\
	\end{matrix} \right] ,A_2=\left[ \begin{matrix}
		0.40&		0.80\\
		0.60&		0.20\\
	\end{matrix} \right], $$
$$A_3=\left[ \begin{matrix}
	0.50&		-0.90\\
	-0.10&		-0.10\\
\end{matrix} \right] ,A_4=\left[ \begin{matrix}
	-0.10&		-0.90\\
	0.10&		0.00\\
\end{matrix} \right] , $$
$$B_1=\left[ \begin{matrix}
	-1.00&		0.00\\
	0.00&		-1.00\\
\end{matrix} \right] ,B_2=\left[ \begin{matrix}
	4.20&		0.00\\
	0.00&		2.40\\
\end{matrix} \right], $$
$$\!\left(\! \mathcal{W} _{k}^{x}\! \right) ^\top\!\!=\!\left[\! \begin{matrix}
	1.74&		1.91&		1.94&		2.01\\
\end{matrix} \!\right]\!\!, \left(\! \mathcal{W} _{k}^y \!\right) ^\top\!\!=\!\left[\! \begin{matrix}
	0.91&		0.23&		0.95&		0.43\\
\end{matrix} \!\right]\!\!.$$
The initial state sets are assumed to be
\begin{align}
	\left( \mathcal{X} _{0|-1} \right) ^{\top}&=\left[ \begin{matrix}
		1.00&		0.24&		1.20&		0.40\\
	\end{matrix} \right],  \label{Eq.initialSet1} \\
	\left( \mathcal{Y} _{0|-1} \right) ^{\top}&=\left[ \begin{matrix}
		2.40&		0.60&		3.70&		1.30\\
	\end{matrix} \right].  \label{Eq.initialSet2}
\end{align}
In our simulation, the initial states are uniformly sampled from the bounded sets \eqref{Eq.initialSet1}-\eqref{Eq.initialSet2}. To simulate the approximate periodic fluctuations in demand and productivity, the actual disturbance are set to be
\begin{align}
	&\left( W _{k}^{x} \right) ^\top=\left[ \begin{matrix}
		1.88+0.03\cos\left(\frac{2\pi k}{30+7\rho_k}\right)	&	1.94\\
	\end{matrix} \right], \nonumber \\
	\!\!\left( W _{k}^y \right)\! ^\top \!&=\!\left[\! \begin{matrix}
		0.944\!+\!0.006\cos\left(\!\frac{2\pi k}{7+2\gamma_k}\!\right) \!\!	& \!\!	0.33+0.094\sin\left(\!\frac{2\pi k}{7+4\tau_k}\!\right)\\
	\end{matrix} \!\right]\!, \nonumber
\end{align}
where $\rho_k$, $\gamma_k$ and $\tau_k$ are uniform random variables in $\left[0,1\right]$. As discussed in Section \ref{Sec.ModelMeasure}, the production rate is private but the inventory information has to be released.

\textcolor{blue}{We first plot the trajectories of the system states and their corresponding interval tubes in Fig.~\ref{Fig.PrivEstResXY} under the optimal privacy filter design for different values of $\epsilon^x$. The shaded pink areas represent the interval tubes, i.e., the uncertainty sets of the system states, which quantify the range of values that an adversary can infer. As shown in Fig.~\ref{Fig.PrivEstResXY}, for $\epsilon^x = 0.01$, the adversary’s uncertainty about the private production rate is small. However, increasing $\epsilon^x$ to $0.5$ slightly reduces the utility of the inventory information but significantly enlarges the adversary's uncertainty. }

\textcolor{blue}{To further illustrate the effectiveness of the privacy filter, we consider one possible adversary estimate based on the central points of the posterior intervals. For the public state $\mathcal{X}_{k|k}$, the adversary’s central estimate is given by $\frac{\overline{X}_{k|k} + \underline{X}_{k|k}}{2}$. As shown in Fig.~\ref{Fig.PrivEstResXY}, when $\epsilon^x$ increases from $0.01$ to $0.5$, the adversary’s central estimate of the production rate becomes significantly less accurate, whereas the central estimate of the inventory information remains accurate. This observation numerically confirms Theorem~\ref{Th.PriLeakBounds}, showing that a higher privacy level prevents the adversary from refining an incorrect central estimate. Hence, the proposed privacy filter effectively mitigates the leakage of production rate information, while introducing a controlled loss of accuracy in the inventory information.}
\begin{figure}[h]
	\centering
	\includegraphics[width=2.8in]{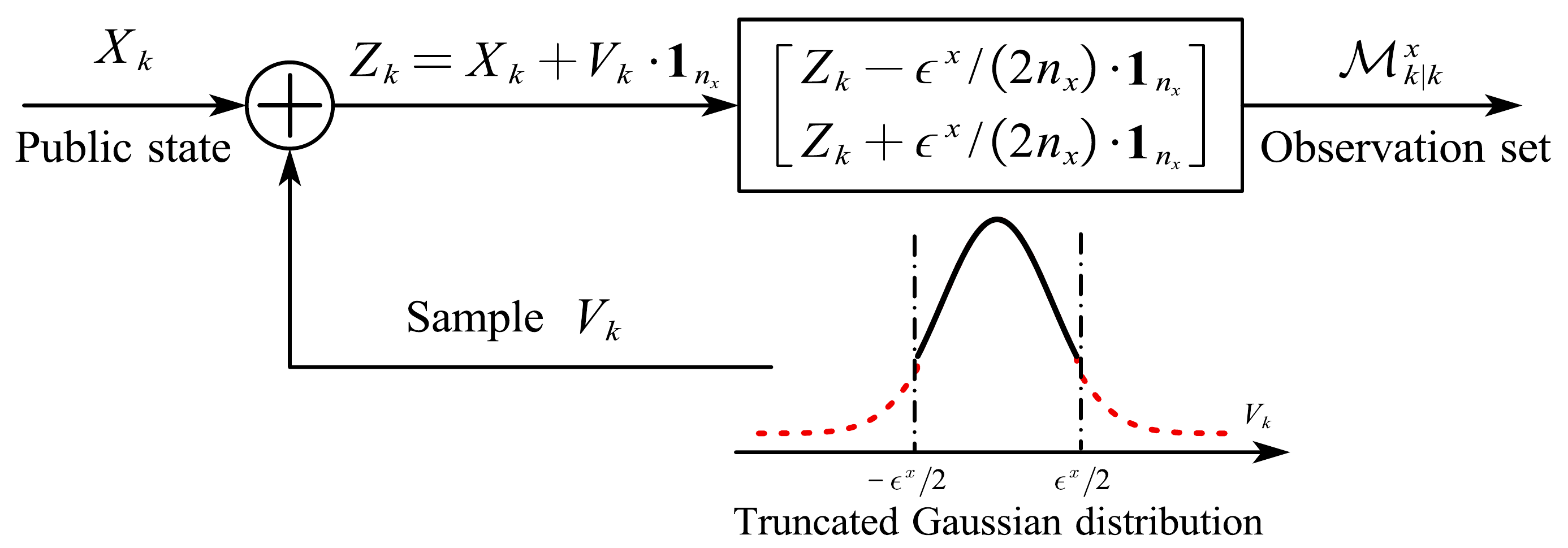}
	\caption{The truncated Gaussian mechanism. }
	\label{Fig.additiveNoiseApp}
\end{figure}

\textcolor{blue}{We also evaluate the utility--privacy trade-off achieved by the proposed optimal privacy-filtering policy and compare it with two benchmark mechanisms: the noiseless quantization method presented in \cite{farokhi2021noiseless} and the truncated Gaussian mechanism for differential privacy introduced in \cite{chen2024bounded}. In the quantization-based approach, the state $x_k$ is processed through a static quantizer that satisfies the utility constraint, and the quantization bin containing $x_k$ is publicly released as $\mathcal{M}_{k|k}^x$. In contrast, the truncated Gaussian mechanism, illustrated in Fig.~\ref{Fig.additiveNoiseApp}, perturbs the original state $x_k$ with additive noise $v_k$ drawn from a zero-mean truncated Gaussian distribution supported on the interval $\left[-\epsilon^x/2,\, \epsilon^x/2\right]$ and having variance $\left(\epsilon^x\right)^2$. The perturbed observation set is then released in the form
\begin{equation}
	\mathcal{M}_{k|k}^x =
	\left[
	\begin{array}{c}
		z_k - \frac{\epsilon^x}{2 n_x} \cdot \mathbf{1}_{n_x} \\
		z_k + \frac{\epsilon^x}{2 n_x} \cdot \mathbf{1}_{n_x}
	\end{array}
	\right], \nonumber
\end{equation}
where $n_x$ denotes the dimension of $x_k$ and $\mathbf{1}_{n_x}$ is the all-ones vector of length $n_x$. Because the additive noise $v_k$ is bounded within $\left[-\epsilon^x/2,\, \epsilon^x/2\right]$, the publicly released state is guaranteed to lie within the interval $\mathcal{M}_{k|k}^x$. }


\textcolor{blue}{We evaluate the privacy-utility trade-off by plotting the average privacy level of the production rate against the average utility of the inventory in Fig. \ref{Fig.TradeoffEvaInt}. For a fair and clear comparison, the privacy level and utility values for the truncated Gaussian mechanism are normalized to the range $[0,1]$, and the same scaling parameters are applied to the other two mechanisms. The results demonstrate that an increase in inventory utility corresponds to a reduction in the privacy level of the production rate, thereby confirming the intrinsic trade-off between data utility and privacy protection.}

\textcolor{blue}{As discussed in \cite{chen2024bounded}, if the interval length, i.e., volume, of domain satisfies certain conditions, the truncated Gaussian mechanism ensures differential privacy. However, given the volume constraint, the shape of the public state set can be arbitrary, and certain shapes may lead to substantial volumetric leakage of the private state after inference attacks. The proposed volumetric method explicitly accounts for this by considering the geometry of the set based on the assumed inference attack, not only its volume. Therefore, it achieves higher privacy levels while maintaining lower data distortion compared with the two static mechanisms.}

\textcolor{blue}{It is worth noting that an adversary could employ more sophisticated estimation techniques, \emph{e.g.}, CCG-based approximation, to infer the private set more accurately from the interval observations provided by the privacy filter. Nevertheless, as shown in Fig. \ref{Fig.TradeoffEvaCCG}, the proposed privacy filter still outperforms the other two mechanisms, leveraging knowledge of the underlying state evolution to reduce conservativeness.}

\begin{figure}
	\centering
	\includegraphics[width=2.8in]{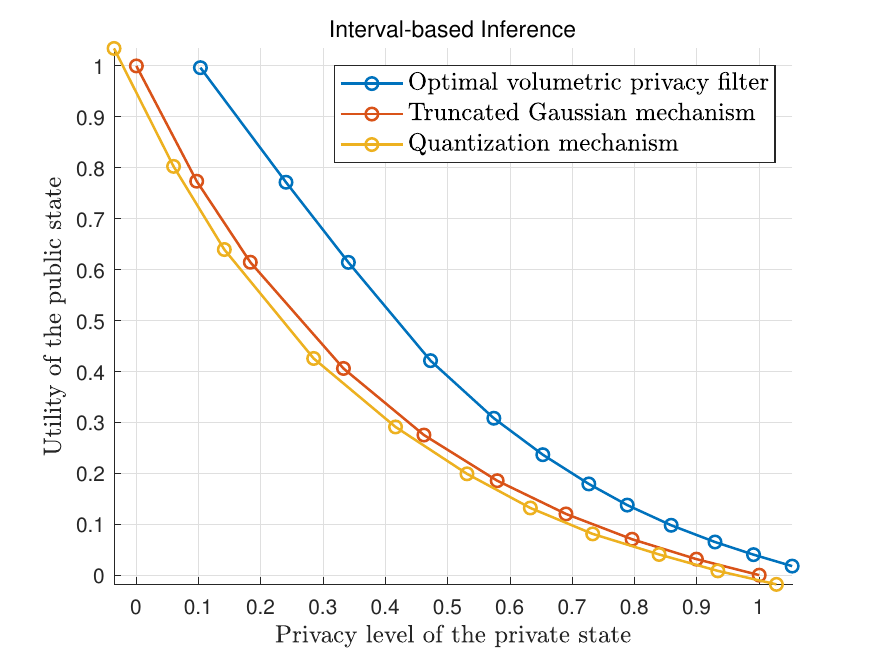}
	\caption{Interval-based inference given the interval privacy filter: the privacy level of the private state and utility of the public state.}
	\label{Fig.TradeoffEvaInt}
\end{figure}
\begin{figure}
	\centering
	\includegraphics[width=2.8in]{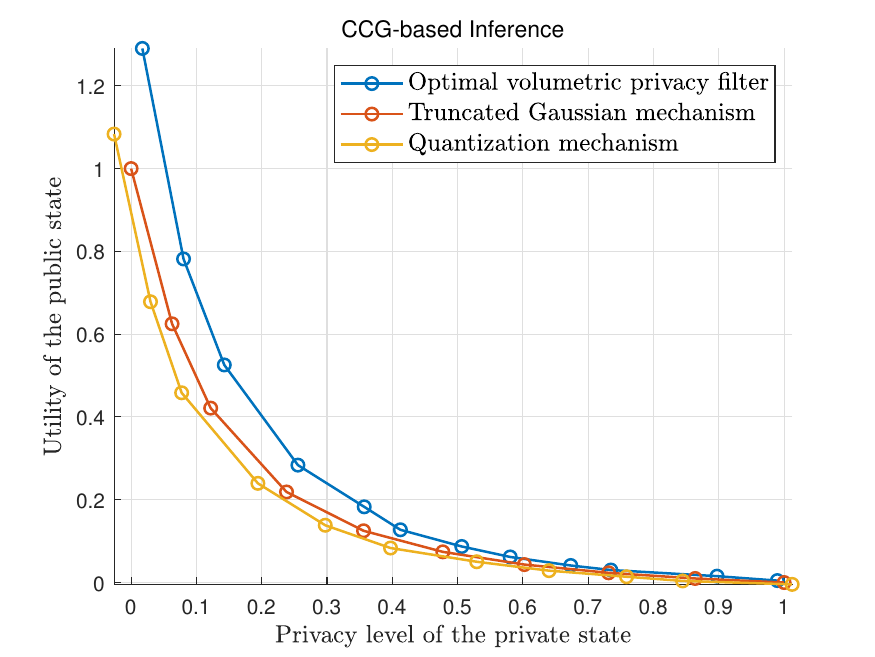}
	\caption{CCG-based inference given the interval privacy filter: the privacy level of the private state and utility of the public state.}
	\label{Fig.TradeoffEvaCCG}
\end{figure}
\section{Conclusion}\label{Sec.Conclusion}
\textcolor{blue}{In this paper, we develop a volumetric framework for privacy analysis and defense in dynamic systems subject to bounded disturbance. An inference attack, whereby an adversary estimates the private information, is formalized, and a volumetric measure is introduced to quantify the resulting privacy level. We develop computational methods based on interval analysis, and establish the theoretical properties of the measure. Furthermore, we propose an optimization-based approach for privacy filter design to defend the system against inference attacks. The effectiveness of our method is demonstrated through a production-inventory case study. }

\textcolor{blue}{It is noted that the performance of the volumetric privacy measure inherently depends on the selected set-membership estimation techniques, and its evaluation accuracy varies with different set representations. Future research will focus on developing approximation methods that ensure improved accuracy, robustness and broader applicability.}

\appendix
\section{Proof of Lemma \ref{Lm.tightestInferY}}\label{App.Lm.tightestInferY}
At the time step $k=0$, the adversary only has prior knowledge, i.e., $\mathcal{Y}_{0|-1}$, therefore, its inference set is $\mathcal{Y}_{0|0} = \mathcal{Y}_{0|-1}$. Also, since at the time step $k=0$, the backward calibration \eqref{Eq.exactYEst0} and \eqref{Eq.exactYEst1} is not available, the adversary can only calibrate the public state set with its prior knowledge $\mathcal{X}_{0|-1}$ and the observation set $\mathcal{M}_{0|0}^x$ according to \eqref{Eq.TightIntervalX0}.

To prove Lemma \ref{Lm.tightestInferY} for $k\geqslant1$, we need the following lemma that computes the tightest interval by forward reachability analysis.
\begin{lem} \label{Lm.basicInterval}
	\cite{bako2019interval,bako2022interval} Consider the static system $S=AM+BW$, where $M$ and $W$ are bounded intervals, the tightest interval for $S$, i.e., $\mathcal{S}$ can be computed as
	\begin{align}
		\mathcal{S} =\Psi \left( A \right) \mathcal{M} \oplus \Psi \left( B \right) \mathcal{W}. \nonumber
	\end{align}
	Also, we can compute its radius and center via
	\begin{align}
		p^{s}=\left| A \right|p^{m}&+ \left| B \right| p^w, \nonumber \\ 
		c^s=Ac^m&+Bc^w. \nonumber
	\end{align}
\end{lem}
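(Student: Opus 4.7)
My plan is to verify the lemma by computing, componentwise, the exact image of the linear map $(M,W) \mapsto AM + BW$ over the product of intervals $\mathcal{M} \times \mathcal{W}$, then matching the resulting bounds against the algebraic expression $\Psi(A)\mathcal{M} \oplus \Psi(B)\mathcal{W}$, and finally reading off the center and radius. The key algebraic observation is the decomposition $A = A^{+} - A^{-}$ with $A^{+} := (A+|A|)/2 \geq 0$ and $A^{-} := (|A|-A)/2 \geq 0$ (both elementwise), so that $|A| = A^{+} + A^{-}$. Under this split, the block definition of $\Psi(A)$ simplifies to blocks of $\pm A^{\pm}$, which is what will connect the formula to the direct interval calculation.

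First I would fix an output coordinate $i$ and write $S_i = \sum_{j} A_{ij} M_j + \sum_{j} B_{ij} W_j$. Since every $M_j$ varies independently in $[\underline{M}_j,\overline{M}_j]$ and likewise for $W_j$, the maximum of each monomial $A_{ij} M_j$ over $M_j$ is $A_{ij}^{+}\overline{M}_j - A_{ij}^{-}\underline{M}_j$, and its minimum is $A_{ij}^{+}\underline{M}_j - A_{ij}^{-}\overline{M}_j$. Summing these over $j$ (and doing the same for the $B$-terms) yields the exact bounds
\begin{align*}
\overline{S} &= A^{+}\overline{M} - A^{-}\underline{M} + B^{+}\overline{W} - B^{-}\underline{W},\\
\underline{S} &= A^{+}\underline{M} - A^{-}\overline{M} + B^{+}\underline{W} - B^{-}\overline{W}.
\end{align*}
For each coordinate $i$ these extremes are simultaneously attainable by picking $M_j,W_j$ at the endpoint matching the sign of $A_{ij}$ (resp.\ $B_{ij}$), which establishes tightness: no strictly smaller interval box contains the image set.

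Second, I would unpack the right-hand side of the claimed identity. Using the convention $A\mathcal{X} = A_1\underline{X} + A_2\overline{X}$ for a block matrix $A = [A_1,\,A_2]$, the two rows of $\Psi(A)\mathcal{M}$ give lower bound $A^{+}\underline{M} - A^{-}\overline{M}$ and upper bound $A^{+}\overline{M} - A^{-}\underline{M}$, because $(A+|A|)/2 = A^{+}$ and $(A-|A|)/2 = -A^{-}$. The identical computation applies to $\Psi(B)\mathcal{W}$, and the Minkowski sum adds lower bounds to lower bounds and upper to upper. The result coincides exactly with the $(\underline{S},\overline{S})$ derived in the first step, proving the set identity.

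Finally, the center/radius formulas follow by elementary arithmetic: $p^{s} = (\overline{S}-\underline{S})/2 = (A^{+}+A^{-})p^{m} + (B^{+}+B^{-})p^{w} = |A|p^{m} + |B|p^{w}$, while $c^{s} = (\overline{S}+\underline{S})/2 = (A^{+}-A^{-})c^{m} + (B^{+}-B^{-})c^{w} = Ac^{m} + Bc^{w}$. There is no real obstacle here; the only point requiring care is the tightness claim in the first step, namely verifying that the coordinatewise extrema are actually realized by feasible points of $\mathcal{M}\times\mathcal{W}$, which reduces to the independence of the individual interval components.
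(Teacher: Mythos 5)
Your proposal is correct and complete. Note, however, that the paper does not actually prove this lemma: it is imported by citation from the interval-analysis literature (Bako et al.), so there is no in-paper argument to compare against. Your componentwise derivation is the standard one and it checks out under the paper's conventions: the split $A=A^{+}-A^{-}$ with $A^{\pm}=(|A|\pm A)/2$ gives exact coordinatewise extrema $\overline{S}=A^{+}\overline{M}-A^{-}\underline{M}+B^{+}\overline{W}-B^{-}\underline{W}$ and the corresponding lower bound, these match the two block rows of $\Psi(A)\mathcal{M}\oplus\Psi(B)\mathcal{W}$ under the paper's rule $[A_1,\,A_2]\mathcal{X}=A_1\underline{X}+A_2\overline{X}$, and the center/radius identities follow by adding and subtracting the bounds. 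You are also right to flag tightness as the only delicate point, and your resolution is sound: for each fixed output coordinate the maximizing (resp.\ minimizing) endpoints of the independent components $M_j,W_j$ can be chosen simultaneously, so each coordinate bound is attained and no smaller box encloses the image (even though the image need not fill the box). The only cosmetic caveat is that the paper is inconsistent about whether the radius is the half-width or the full width of an interval; since the relation $p^{s}=|A|p^{m}+|B|p^{w}$ is homogeneous, your argument is valid under either convention.
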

According to Lemma \ref{Lm.basicInterval}, the tightest intervals for \eqref{Eq.exactYEst0} and \eqref{Eq.exactYEst1} are \eqref{Eq.TightInterval1} and \eqref{Eq.TightInterval2}. Then, the intersection of different intervals, i.e., \eqref{Eq.exactYEst2} and \eqref{Eq.exactYEst3}, can be computed with \eqref{Eq.TightInterval3} and \eqref{Eq.TightInterval4}. Finally, the one-step forward reachable set  \eqref{Eq.exactYEst4} can be approximated with the tightest interval \eqref{Eq.TightInterval5} based on Lemma \ref{Lm.basicInterval}, and the calibrated uncertainty set $\mathcal{X}_{k|k}$ and the tightest prior inference set $\mathcal{Y}_{k|k-1}$ can be approximated similarly. 
\section{Proof of Lemma \ref{Lm.radiusY}}\label{App.Lm.radiusY}
According to Lemma \ref{Lm.basicInterval}, the radius of $\mathcal{M}^y_{k-1|k}$ can be computed as
\begin{align}\label{Eq.radiusYMK_1MultiDim}
	p_{k-1|k}^{m,y}\!=\! \left|\! A_{2}^{-1}\! \right|p_{k|k}^{m,x}\!+\!\left|\! A_{2}^{\!-\!1}\!A_1 \!\right|p_{k\!-\!1|k-\!1}^{x}\!+\!\left|\! A_{2}^{\!-\!1}\!B_1 \!\right|p_{k}^{w,x}\!,\!\!
\end{align}
where $p^{m,x}_{k|k}$, $p_{k-1|k-1}^{x}$ and $p^{w,x}_k$ are radii of $\mathcal{M}_{k|k}^x$, $\mathcal{X}_{k-1|k-1}$ and $\mathcal{W}_{k}^x$, respectively.
Since $\mathcal{Y}_{k-1|k}$ is the intersection result from $\mathcal{M}^y_{k-1|k}$ and $\mathcal{Y}_{k-1|k-1}$, the radius of $\mathcal{Y}_{k-1|k}$ is smaller than the radius of $\mathcal{M}^y_{k-1|k}$, i.e., $p_{k-1|k}^{y}\leqslant p_{k-1|k}^{m,y}$. 
Also, the radius of $\mathcal{Y}_{k|k}$ can be computed as
\begin{align}\label{Eq.radiusYKMultiDim}
	p_{k|k}^{y}=\left| A_3 \right|p_{k-1|k}^{x}+\left| A_4 \right|p_{k-1|k}^{y}+\left| B_2 \right|p_{k}^{w,y}. 
\end{align}
By substituting \eqref{Eq.radiusYMK_1MultiDim} and $p_{k-1|k}^{y}\leqslant p_{k-1|k}^{m,y}$ into \eqref{Eq.radiusYKMultiDim}, we have
\begin{align}
	p_{k|k}^{y}\leqslant \left| A_3 \right|p_{k-1|k}^{x}+\left| B_2 \right|p_{k}^{w,y}+\left| A_4 \right|\left| A_{2}^{-1}B_1 \right|p_{k}^{w,x}\nonumber\\+\left| A_4 \right|\left( \left| A_{2}^{-1} \right|p_{k|k}^{m,x}+\left| A_{2}^{-1}A_1 \right|p_{k-1|k-1}^{x} \right) . \nonumber
\end{align}
Since $\overline{p}^x\geqslant p^{m,x}_{j|j}$ for any $j\geqslant0$ and $\mathcal{X}_{k-1|k}$ is a subset of $\mathcal{M}_{k-1|k-1}^x$, we have $p_{k-1|k}^{x}\leqslant p^{m,x}_{k-1|k-1}\leqslant\overline{p}^x$ for any $k\geqslant1$, thus we have \eqref{Eq.radiusYInequality}.
\section{Proof of Theorem \ref{Th.PriLeakBounds}}\label{App.Th.PriLeakBounds}
The difference set $\Delta \mathcal{Y}_{k|k}$ is computed as, $$\Delta \mathcal{Y} _{k|k}=\mathcal{Y} _{k|k-1}\setminus \mathcal{Y} _{k|k}=\Phi\left(A_3\right)\Delta \mathcal{X} _{k-1|k}\oplus  \Phi\left(A_4\right)\Delta \mathcal{Y} _{k-1|k},$$
where 
\begin{align}
	\Delta \mathcal{X} _{k-1|k}&=\mathcal{X} _{k-1|k-1}\setminus \mathcal{X} _{k-1|k}\nonumber\\&=\left[ \begin{array}{c}
		\min \left\{ \underline{X}_{k-1|k-1}-\underline{M}_{k-1|k}^{x},0 \right\}\\
		\max \left\{ \overline{X}_{k-1|k-1}-\overline{M}_{k-1|k}^{x},0 \right\}\\
	\end{array} \right] , \nonumber \\ 
	\Delta \mathcal{Y} _{k-1|k}&=\mathcal{Y} _{k-1|k-1}\setminus \mathcal{Y} _{k-1|k}\nonumber\\&=\left[ \begin{array}{c}
		\min \left\{ \underline{Y}_{k-1|k-1}-\underline{M}_{k-1|k}^{y},0 \right\}\\
		\max \left\{ \overline{Y}_{k-1|k-1}-\overline{M}_{k-1|k}^{y},0 \right\}\\
	\end{array} \right] .  \nonumber
\end{align}
Therefore, the volume of the difference set is \eqref{Eq.PrivacyLeaksMultiDimen}.

With Lemma \ref{Lm.basicInterval}, we have
\begin{align}
	p_{k|k}^{\Delta y}=\left| A_3 \right|p_{k-1|k}^{\Delta x}+\left| A_4 \right|p_{k-1|k}^{\Delta y}, \nonumber
\end{align}
where the radius $p_{k-1|k}^{\Delta x}$ and $p_{k-1|k}^{\Delta y}$ can be computed via
\small
\begin{align}\label{Eq.positiveDelatK_1}
	&2p_{k-1|k}^{\Delta z} \nonumber
	\\=&\max \left\{ \overline{Z}_{k-1|k-1}-\overline{M}_{k-1|k}^{z},0 \right\} -\min \left\{ \underline{Z}_{k-1|k-1}-\underline{M}_{k-1|k}^{z},0 \right\} \nonumber
	\\
	=&\max \left\{0, \overline{Z}_{k-1|k-1}-\overline{M}_{k-1|k}^{z}+\underline{M}_{k-1|k}^{z}-\underline{Z}_{k-1|k-1}, \right. \nonumber \\
	&\!\!\!\!\overline{Z}_{k-\!1|k-\!1}\!-\!\overline{M}_{k-\!1|k}^{z},\underline{M}_{k-1|k}^{z}\!-\!\underline{Z}_{k-\!1|k-\!1} \left.\right\}, \text{for}\, Z=X,Y, \!\!
\end{align}
\normalsize
which satisfies $p_{k-1|k}^{\Delta z} \geqslant 0$. As a result, we have
\begin{align}
	\overline{\mathrm{Vol}}\left( \Delta \mathcal{Y} _{k|k} \right) 
	&=\left\| \left| A_3 \right|p_{k-1|k}^{\Delta x}+\left| A_4 \right|p_{k-1|k}^{\Delta y} \right\|_1 \nonumber
	\\
	&\overset{(a)}{\leqslant} \left\| A_3 \right\|_1\left\| p_{k-1|k}^{\Delta x} \right\|_1+\left\| A_4 \right\|_1\left\| p_{k-1|k}^{\Delta y} \right\|_1 \nonumber
	\\
	&=\left\| A_3 \right\|_1\overline{\mathrm{Vol}}\left( \Delta \mathcal{X} _{k-1|k} \right) +\left\| A_4 \right\|_1\overline{\mathrm{Vol}}\left( \Delta \mathcal{Y} _{k-1|k} \right) , \nonumber
\end{align}
where $(a)$ is due to $p_{k-1|k}^{\Delta x} \geqslant 0$ and $p_{k-1|k}^{\Delta y} \geqslant 0$.

Besides, given an interval $\mathcal{X}$, we can express it with its center point and radius, i.e., $\mathcal{X} = \left[\begin{array}{c}
	\frac{c-p}{2}\\ \frac{c+p}{2}
\end{array}\right]$. Therefore, we have 
\begin{align}\label{Eq.deltaVolYLower}
	\overline{\mathrm{Vol}}\left( \Delta \mathcal{Y} _{k|k} \right) = &\left\|\overline{Y} _{k|k}-\overline{Y} _{k|k-1}\right\|_1 + \left\|\underline{Y} _{k|k}-\underline{Y} _{k|k-1}\right\|_1  \nonumber
	\\
	\geqslant & \left\| \overline{Y} _{k|k}+\underline{Y} _{k|k}-\left(\overline{Y} _{k|k-1}+\underline{Y} _{k|k-1}\right) \right\|_1 \nonumber
	\\
	\geqslant & 2\left\| c_{k|k}^{y}-c_{k|k-1}^{y} \right\|_1. \nonumber
\end{align}
\section{Proof of Theorem \ref{Th.linearOpt}}\label{App.Th.linearOpt}
To maximize the privacy level, it is equivalent to minimize the amount of uncertainty reduction since we have $\mathrm{Vol}\left(\Delta \mathcal{Y}_{k|k}\right) = \mathrm{Vol}\left( \mathcal{Y}_{k|k-1}\right)-\mathrm{Vol}\left( \mathcal{Y}_{k|k}\right)$, where the prior uncertainty set $\mathcal{Y}_{k|k-1}$ is fixed at time step $k$. 

Besides, the amount of uncertainty reduction
$\mathrm{Vol}\left(\Delta\mathcal{Y}_{k|k}\right) = \left\|p_{k|k}^{\Delta y}\right\|_1= \left\| \left| A_3 \right|p_{k-1|k}^{\Delta x}+\left| A_4 \right|p_{k-1|k}^{\Delta y} \right\|_1$, where the elements of $p_{k-1|k}^{\Delta x}$ and $p_{k-1|k}^{\Delta y}$ are non-negative vectors as shown in \eqref{Eq.positiveDelatK_1}. Therefore, we can replace the objective function with the slack variable $\epsilon^y$ and add $\mathrm{Vol}\left(\Delta\mathcal{Y}_{k|k}\right)\leqslant\epsilon ^y$ as a new constraint, and then minimize $\epsilon^y$.

Since $\overline{\mathrm{Vol}}\left(\Delta\mathcal{Y}_{k|k}\right)$ is determined by $p_{k-1|k}^{\Delta x}$ and $p_{k-1|k}^{\Delta y}$, we can replace constraints \eqref{Eq.TightInterval3} and \eqref{Eq.TightInterval4} with the constraints of difference sets \eqref{Eq.positiveDelatK_1}. Also, the objective function increases with any elements of $p_{k-1|k}^{\Delta x}$ and $p_{k-1|k}^{\Delta y}$ since the elements of $p_{k-1|k}^{\Delta x}$, $p_{k-1|k}^{\Delta y}$, $\left|A_3\right|$ and $\left|A_4\right|$ are non-negative. As a result, we can replace the constraint of $p_{k-1|k}^{\Delta x}$ and $p_{k-1|k}^{\Delta y}$, i.e., \eqref{Eq.positiveDelatK_1}, with inequalities \eqref{Eq.optDeltaConstraint1}, and let $p_{k-1|k}^{\Delta x} $ and $ p_{k-1|k}^{\Delta y}$ be decision variables. 

Besides, the constraints $\mathcal{S}_{k|k}^x \subseteq \mathcal{M}^x _{k|k}$, $\mathcal{M}^x _{k|k}\subseteq \mathcal{X} _{k|k-1}$ and $\overline{M}^x_{k|k} \geqslant \underline{M}^x_{k|k}$ are equivalent to the inequality constraint,  $\underline{X}_{k|k-1}\leqslant \underline{M}^x_{k|k}\leqslant \underline{S}^x_{k|k} \leqslant \overline{S}^x_{k|k}\leqslant \overline{M}^x_{k|k}\leqslant \overline{X}_{k|k-1}$, and the utility constraint $\overline{\mathrm{Vol}}\left( \mathcal{M}^x _{k|k} \right) \leqslant \epsilon^x$ can be replaced with $\left\| \overline{M}^x_{k|k}-\underline{M}^x_{k|k} \right\|_1\leqslant \epsilon ^x$. 

Finally, the objective and the constraints are linear functions of the decision variables, thus, the optimal privacy filter can be obtained by solving the linear programming $\mathbf{P_2}$.

\bibliographystyle{unsrt}        
\bibliography{reference}           



\end{document}